\newtheorem{cor}{Corollary}
\newtheorem{lem}{Lemma}
\newtheorem{thm}{Theorem}
\newcommand{\matr}[1]{{\boldsymbol{#1}}}
\renewcommand{\vec}[1]{{\boldsymbol{#1}}}
\newcommand{\EE}{\mathfrak{E}}
\newcommand{\VV}{\mathfrak{V}}
\newcommand{\GG}{\mathfrak{G}}
\newcommand{\LL}{\mathcal{L}}
\newcommand{\OO}{\mathcal{F}}
\newcommand{\GC}{\mathcal{G}}
\newcommand{\nn}{N}
\newcommand{\nl}{M}
\newcommand{\eye}{\mathds{1}}
\begin{document}

\title{Synchronized states of power grids and oscillator networks by convex optimization}

\author{Carsten Hartmann}%
 \email{c.hartmann@fz-juelich.de}
  \affiliation{Institute of Energy and Climate Research -- Energy Systems Engineering (IEK-10), Forschungszentrum J\"ulich,  52428 J\"ulich, Germany}
  
\author{Philipp C. Böttcher}%
  \affiliation{Institute of Energy and Climate Research -- Energy Systems Engineering (IEK-10), Forschungszentrum J\"ulich,  52428 J\"ulich, Germany}

\author{David Gross}%
\affiliation{Institute for Theoretical Physics, University of Cologne, 50937 K\"oln, Germany}

\author{Dirk Witthaut}%
 \email{d.witthaut@fz-juelich.de}
  \affiliation{Institute of Energy and Climate Research -- Energy Systems Engineering (IEK-10), Forschungszentrum J\"ulich,  52428 J\"ulich, Germany}
  \affiliation{Institute for Theoretical Physics, University of Cologne, 50937 K\"oln, Germany}

\begin{abstract}
Synchronization is essential for the operation of AC power systems: All generators in the power grid must rotate with fixed relative phases to enable a steady flow of electric power. Understanding the conditions for and the limitations of synchronization is of utmost practical importance.  
In this article, we propose a novel approach to compute and analyze the stable stationary states of a power grid or an oscillator network in terms of a convex optimization problem. This approach allows to systematically compute \emph{all} stable states where the phase difference across an edge does not exceed $\pi/2$.
Furthermore, the optimization formulation allows to rigorously establish certain properties of synchronized states and to bound the error in the widely used linear power flow approximation.
\end{abstract}
  
\maketitle

\section{Introduction}

A reliable supply of electricity is vital for our society. The operation of the electric power system relies on a stable, synchronized state of the grid: All generators operate at the same frequency close to the reference frquency of of 50 or 60 Hz, respectively~\cite{dorfler2013synchronization}. The strict phase locking enables a steady flow of electric power over large distances~\cite{witthaut2022collective}. Violations of synchronization can occur after failures, disrupting power flows and eventually resulting in widespread power outages (see, e.g.,~\cite{UCTE07}). Remarkably, similar problems occur in various systems studied in statistical physics. The celebrated Kuramoto model describes the dynamics of coupled limit-cycle oscillators and bears strong similarities to models of coupled generators~\cite{kuramoto1975self,acebron2005kuramoto,rodrigues2016kuramoto}. 

Due to their utmost importance, synchronization and stability are essential topics in various disciplines~\cite{dorfler2014synchronization}. 
A central question is: Does a given network support a stable, fully synchronous state or not? Clearly, a strong coupling fosters synchronization, while a high power load impedes synchronization -- but a comprehensive rigorous answer to this question is still lacking. For general networks, accurate sufficient conditions have been developed in~\cite{dorfler2013synchronization} and sharpened in Ref.~\cite{jafarpour2018synchronization}, while the role of the network topology was explored in Refs.~\cite{rohden2012self,motter2013spontaneous}. Moreover, it has been realized that sparse networks can be multistable, i.e., they support more than one stable synchronized state~\cite{wiley2006size}. Again we are left to the question of when these states exist~\cite{delabays2016multistability,manik2017cycle}.
Another important question is how vulnerable a given synchronous state is to perturbations\cite{menck2014dead,delabays2017size, tyloo2018robustness}.

In this article, we propose an alternative approach to the synchronization problem by mapping the fixed point equations to a convex optimization problem. In the case of potential multistability, each synchronous state is associated with an individual convex problem. This reformulation allows to systematically compute all synchronized states or to refute their existence. The optimization formulation provides further insights into the analytic and geometric properties of synchronized states. For instance, we provide an alternative view of the linear power flow or DC approximation widely used in power engineering~\cite{machowski2020power}. The formulation as an optimization problem allows us to derive rigorous bounds on the error emerging from the linearization. 

The article is organized as follows.
We first review essential models for power grid synchronization as well as the Kuramoto model in Sec.~\ref{sec:background}.
The mathematical background and important prior results on the subject are summarized in Sec.~\ref{sec:notation-review}.
The main result -- the optimization approach -- is formulated in Sec.~\ref{sec:powerflow-opt}.
The relation to the linear or DC power flow equations is analyzed in Sec.~\ref{sec:DCapprox}.
Geometric aspects of the power flow optimization problem are discussed in Sec.~\ref{sec:geometry}.
We close with a summary and an outlook to future research in Sec.~\ref{sec:conclusion}.

\section{Stationary states in power grids and oscillator networks}
\label{sec:background}

In this section, we review models for the dynamics of electric power grids and coupled limit cycle oscillators. In all cases, the stationary state is described by a set of nonlinear algebraic equations which is the central objective of this article.

\subsection{The swing equation and the classical model}
\label{sec:swing}

Synchronous generators traditionally form the backbone of electric power generation. The state of the $n$th machine is characterized by the mechanic and electric phase angle $\theta_n(t)$, using a frame of reference rotating at the nominal frequency of the grid. 
The phase angle evolves according to the \emph{swing equation}~\cite{machowski2020power}
\begin{equation}
   M_n \ddot \theta_n + D_n \dot \theta_n = p_n^{\rm mech} - p_n^{\rm el} \, .
\end{equation}
Here, $p_n^{\rm mech}(t)$ is the mechanical input power to the generator, $p_n^{\rm el}(t)$ the power delivered to the grid, $M_n$ is an inertia constant, and $D_n$ a damping constant. 

Furthermore, a model for the load nodes is needed to close the equations of motion~\cite{witthaut2022collective}. One classical approach assumes load nodes to be fully passive and describes them as constant impedances to the ground. These nodes can be eliminated from the network equations, a process commonly referred to as Kron reduction~\cite{dorfler2012kron,nishikawa2015comparative}. The resulting model includes only generator nodes and the electric power exchanges with the grid can be written as
\begin{align*}
    p_n^{\rm el}(t) = \Delta p^{\rm eff}_n + \sum_{m} K^{\rm eff}_{nm} \sin(\theta_n(t) - \theta_m(t)) \, ,
\end{align*}
with an effective connectivity $K^{\rm eff}_{nm}$ and an effective power injection $\Delta p^{\rm eff}_n$ accounting for the eliminated nodes.
The stationary states of the grid are thus given by the nonlinear algebraic equations
\begin{align}
    p^{\rm eff}_n = 
    \sum_{m} K^{\rm eff}_{nm} \sin(\theta_n - \theta_m)
    \quad \forall \;
    \mbox{generator nodes} \; n
    \label{eq:stationary-classical}
\end{align}
with $p^{\rm eff}_n=p_n^{\rm mech}- \Delta p^{\rm eff}_n$. Notably, the coupling matrix $K^{\rm eff}_{nm}$ now describes an effective network, not the physical power transmission grid. This matrix is typically full, i.e.,~all entries are non-zero~\cite{dorfler2012kron}.

\subsection{The structure-preserving model}

A different approach that conserves the network structure has been suggested by Bergen and Hill \cite{bergen1981structure}. The starting point is the observation that the real power drawn from the grid by a load node $n$ typically increases with the frequency. Linearizing around the nominal frequency, the load is thus written as
\begin{equation}
    p_n^{\rm demand} =  p_n^{\rm demand, 0} + D_n \dot \theta_n . 
\end{equation}
For a consistent description, we need the power injected into the grid, which is simply given by $p_n^{\rm el} =  -p_n^{\rm demand}$. Then, we can write the evolution equation for \emph{all} nodes as
\begin{equation}
   M_n \ddot \theta_n + D_n \dot \theta_n = p_n^{\rm in} - p_n^{\rm el},
\end{equation}
with $M_n = 0$ and $p_n^{\rm in} = -p_n^{\rm demand, 0}$ for load nodes and $p_n^{\rm in} = +p_n^{\rm mech}$ for generator nodes. The real power flow between two arbitrary nodes $n$ and $m$ is given 
\begin{equation}
   p_{n \rightarrow m} = K_{nm} \sin(\theta_n - \theta_m).
\end{equation}
Hence, the equations of motion read
\begin{equation}
   M_n \ddot \theta_n + D_n \dot \theta_n = p_n^{\rm in} - \sum_m K_{nm} \sin(\theta_n - \theta_m).
\end{equation}
and the stationary states are given by the nonlinear algebraic equations
\begin{align}
    p^{\rm in}_n = 
    \sum_{m} K_{nm} \sin(\theta_n - \theta_m)
    \quad \forall \;
    \mbox{nodes} \; n.
    \label{eq:stationary-preserving}
\end{align}
We emphasize that this model includes all nodes (buses) and all edges (branches) of the network. Neglecting Ohmic losses, the coupling constants $K_{nm}$ are proportional to the inverse of the reactance $X_{nm}^{-1}$ for a transmission line $(n,m)$ and $K_{nm} = 0$ if no such line exists. Hence, the model conserves the structure of the actual power grid, which is sparse, and we may interpret $K_{nm}$ as a weighted adjacency matrix.

\subsection{The Kuramoto model}

The Kuramoto model describes the dynamics of coupled limit-cycle oscillators with applications in various scientific disciplines \cite{kuramoto1975self,acebron2005kuramoto,rodrigues2016kuramoto}. 
The position of an oscillator $n$ along the limit cycle is encoded in a phase variable $\theta_n(t)$ which evolves according to the equations of motion
\begin{align*}
    \frac{d \theta_n}{dt} = 
    \omega^{(0)}_n + \sum_{m=1}^{N} K_{nm} \sin(\theta_m - \theta_n),
\end{align*}
where $\omega^{(0)}_n$ denotes the natural frequency of the oscillator $n$, $K_{nm} = K_{mn} \ge 0$ is a coupling constant and $N$ denotes the number of oscillators in the network. 

Traditionally, research on the Kuramoto model has focused on large systems ($N \rightarrow \infty$) and the transition from incoherence (all oscillators rotate incoherently with $\dot \theta_n \approx \omega^{(0)}_n$) to partial synchronization (a subset of oscillators lock their frequencies $\dot \theta_n$). An excellent introduction to this direction of research is given in Ref.~\cite{strogatz2000kuramoto}.

More recently, finite systems and their fixed points have received increased attention. The fixed points are determined by the set of equations
\begin{align*}
    \omega^{(0)}_n = \sum_{m=1}^{N} K_{nm} \sin(\theta_n - \theta_m), \qquad \forall \, n \in \{1,\ldots,N\},
\end{align*}
which is equivalent to equations \eqref{eq:stationary-classical} and \eqref{eq:stationary-preserving} if we identify $\omega^{(0)}_n$ with $p_n$.
Two lines of mathematical research deserve to be mentioned: First, several authors exploit the equivalence to flow networks \cite{dorfler2013synchronization,delabays2016multistability,delabays2017multistability,manik2017cycle,balestra2019multistability}.
Second, the fixed point equation can be complexified to remove the sine \cite{muller2021algebraic,chen2022toric}.

\subsection{Stability of fixed points}

For all of the above models, one can show that a stationary state is stable if and only if the Laplacian matrix $\matr L$ with elements
\begin{align*}
    L_{m,n} = \left\{ \begin{array}{lll}
    - K_{nm} \cos(\theta_n - \theta_m) & 
         \; \mbox{for} \; &
         n\neq m \\
    \sum_l K_{nl} \cos(\theta_n - \theta_l) & &
    n = m
    \end{array} \right.
\end{align*}
has only positive eigenvalues except for a trivial zero eigenvalue corresponding to the eigenvector $(1,1,\ldots,1)^\top$~\cite{manik2014supply,witthaut2022collective}. A sufficient condition for stability is that
\begin{align}
    \cos(\theta_n - \theta_m) > 0
    \label{eq:cond-normal}
\end{align}
for all edges $(n,m)$ in the network. Stationary states that violate this condition are typically, but not always, unstable~\cite{manik2014supply}. In the following, we will focus on the stationary states that satisfy condition~\eqref{eq:cond-normal} for all edges and will refer to them as ``normal'' states or solutions.

\subsection{The linear power flow or DC approximation}
\label{intro-DC}

In power transmission grids, the phase difference $\theta_n-\theta_m$ along a transmission line $(n,m)$ is typically small. Hence, it is often appropriate to linearize the sine in the expression for the real power flow. The steady state of a lossless power grid is then determined by the linear set of equations
\begin{align}
    p_n = \sum_m K_{nm} \left( \theta_n - \theta_m \right).
\end{align}
This approximation is referred to as the \emph{linear power flow} or the \emph{DC approximation} because the resulting equations are mathematically equivalent to Kirchhoff’s equations for DC electric circuits. The validity of the linear power flow approximation has been studied numerically in Refs.~\cite{purchala2005usefulness,stott2009dc}.

\section{Mathematical Background}
\label{sec:notation-review}

In this section, we turn to the mathematical analysis of the equations
\begin{align}
    p_n = \sum_m K_{nm} \sin(\theta_n - \theta_m) \qquad
    \forall \; \mbox{nodes} \; n
    \label{eq:realpower-intro}
\end{align}
that describe the stationary state of a power grid or a set of Kuramoto oscillators. These will be referred to as the \emph{real power flow equations} in the following to distinguish from linear power flow or the load-flow equations that also include the reactive power~\cite{witthaut2022collective}. We start with general aspects of (linear) flow networks before we move to the nonlinear real power flow.

\subsection{Flow networks and algebraic graph theory}

We introduce some useful notation and review some important results from the literature. So consider a network with $\nn$ vertices or nodes and $\nl$ transmission lines or edges. Nodes are labeled as $n = 1,\ldots,\nn$, and the node set is denoted as $\VV$. Throughout this article, we assume that the network is connected. Edges are either labeled by their endpoints $(n,m)$ or consecutively as $e = 1,\ldots,\nl$, and the set of edges is denoted as $\EE$. To keep track of the direction of power flows we fix an orientation for each edge.  That is, if the line $e \equiv (n,m)$ is oriented from $n$ to $m$, then a positive flow points from $n$ to $m$ and a negative flow points from $m$ to $n$. To keep track of the structure of the topology of the grid and the orientation, we define  the node-edge incidence matrix $\matr E \in \mathbb{R}^{\nn \times \nl}$ with components \cite{Newm10}
\begin{equation}
   E_{n,e} = \left\{
   \begin{array}{r l}
      1 & \; \mbox{if line $e$ starts at node $n$},  \\
      - 1 & \; \mbox{if line $e$ ends at node $n$},  \\
      0     & \; \mbox{otherwise}.
  \end{array} \right.
  \label{eq:def-nodeedge}
\end{equation}
The edges' transmission capacities $K_e = K_{nm}$ are summarized in a diagonal matrix $\matr K = \mbox{diag}(K_1,\ldots,K_\nl)$.

A node $n$ is characterized by the power injection $p_n$  or its natural frequency, respectively. All power injections are summarized in the vector $\vec p = (p_1,\ldots,p_\nn)^\top$, where the superscript ${}^\top$ denotes the transpose of a vector or matrix. As we are dealing with lossless power grids, we will assume $\sum_n p_n = 0$ from now on. Furthermore, we define the vector of nodal phase angles $\vec \theta = (\theta_1,\ldots,\theta_\nn)^\top$. The real power flow of an edge $e$ is denoted as $f_e$, and all flows are summarized in the vector $\vec f = (f_1,\ldots,f_\nl)^\top$.

For lossless power grids, the real power flows have to satisfy the continuity equation or Kirchhoff's current law (KCL)
\begin{align}
    p_n = \sum_{m=1}^{\nn} f_{nm} , \qquad \forall n \in \{1,\ldots,\nn\}.
\end{align}
This set of equations can also be written in a vectorial form as
\begin{equation}
    \vec p = \matr E \vec f . 
    \label{eq:KCL}
\end{equation}

The problem admits a straightforward solution if we impose the linear power flow or DC approximation. The flows now read 
\begin{align*}
    f_{nm} &= K_{nm}  (\theta_n - \theta_m), \\
    \vec f &=  \matr K  \matr E^\top \vec \theta .
\end{align*}
Substituting the last equation into the KCL yields
\begin{align}
    \vec p = \matr E \matr K  \matr E^\top \vec \theta  = \matr L \vec \theta
    \label{eq:DC-pLt}
\end{align}
where we have introduced the Laplacian matrix $\matr L \in \mathbb{R}^{\nn \times \nn}$ with components \cite{Newm10}
\begin{equation}
   L_{n,m} = \left\{
   \begin{array}{r l}
         \sum_{l=1}^\nn K_{nl}  & \; \mbox{for} \, n = m, \\
         - K_{nm} & \; \mbox{if $n\neq m$ and $n$, $m$ are adjacent},  \\
      0     & \; \mbox{otherwise}.
  \end{array} \right.
  \label{eq:def-Laplacian}
\end{equation}
The equation \eqref{eq:DC-pLt} is linear and can thus be solved for $\theta$ in a straightforward way,
\begin{align*}
    \vec \theta &= \matr L^+ \vec p,  \\
    \vec f &= \matr K  \matr E^\top  \matr L^+ \vec p.
\end{align*}
where the superscript $+$ denotes the Moore-Penrose pseudoinverse.

We note that the equations above still include a gauge freedom: If we shift all phases $\theta_n$ by a constant, then the power flows will not change. This is reflected by the fact that the Laplacian matrix of a connected network has rank $N-1$. A typical choice in applications is to fix the phase of one distinguished slack node as $\theta_n \equiv 0$ and remove this node from the set of equations \eqref{eq:DC-pLt}. Removing the respective row and column from $\matr L$, we obtain a \emph{grounded} Laplacian matrix.

\subsection{Cycle flows and Helmholtz decomposition}

Before we proceed with the real power flow, we provide some further results on network flows. We recall that we assume that the network is connected throughout this article.

We first note that the KCL \eqref{eq:KCL} is underdetermined such that the general solution can be written as 
\begin{align*}
    \vec f = \vec f^{(0)} + \vec f^{(c)} ,
\end{align*} 
where $\vec f^{(0)}$ is a special solution and $\vec f^{(c)}$ denotes an arbitrary solution of the associated homogeneous equation, i.e., a vector from the kernel of the node-edge incidence matrix $\matr E$. The vectors $\vec f^{(c)}$ correspond to cycle flows: The power balance $\matr E \vec f^{(c)}$ vanishes everywhere, implying that no real power flows in or out of the network. We can fix a basis for the kernel by choosing  $\nl-\nn+1$ independent fundamental cycles and encode this basis in the cycle-edge incidence matrix $\matr C \in \mathbb{R}^{\nl \times (\nl-\nn+1)}$ with components 
\begin{equation}
   C_{e,\beta} = \left\{
   \begin{array}{r l}
      1 & \; \mbox{if edge $e$ belongs to cycle $\beta$},  \\
      - 1 & \; \mbox{if the reverse edge $e$ belongs to cycle $\beta$},  \\
      0     & \; \mbox{otherwise},
  \end{array} \right.
  \label{eq:def-cyclein}
\end{equation}
such that $\matr{E} \matr{C} = \matr{0}$. An important example is given by plane networks, i.e.,~networks that are drawn in the plane without any edge crossing. Here one can choose the faces or plaquettes of the graph as fundamental cycles. 
After fixing a basis, any cycle flow vector can be written as
$$
   \vec f^{(c)} = \matr C \vec \ell,
$$ 
where $\vec \ell = (\ell_1,\ldots,\ell_{\nl-\nn+1})$ is a vector of cycle or loop flow amplitudes. We note that this decomposition proves to be useful in various linear power flow problems~\cite{ronellenfitsch2017dual,horsch2018linear}.

Similar to the Helmholtz decomposition in vector analysis, a flow $\vec f$ can be decomposed into a cycle or source-free component $\vec f^{(c)} \in \ker{( \matr E)}$ and a directed or irrotational component $\vec f^{(d)}$ as we make precise in the following lemma. We assume that the flow $\vec f$ is energy conserving, i.e.,~the sum of nodal out- and inflows vanishes, $\sum_n (\matr E \vec f)_n = 0$. 

\begin{lem}
\label{lem:helmholtz}
Any energy-conserving flow vector can be decomposed as 
\begin{equation}
    \vec f = \vec f^{(d)} + \vec f^{(c)} \, ,
\end{equation}
where $\vec f^{(c)} \in \ker{( \matr E)}$ is a pure cycle flow and
$\vec f^{(d)} \in \ker{( \matr C^\top \matr K^{-1})}$ by applying the projections
\begin{align}
    \begin{split}
        \matr \Pi _{\rm dir} &= \matr K\matr E^\top \matr L^{+} \matr E\\
        \matr \Pi _{\rm cycle} &= \eye - \matr\Pi _{\rm dir} \, .
    \end{split}
    \label{eq:def-projector}
\end{align}
The projectors are orthogonal with respect to the inner product
\begin{equation}
    \langle \vec \xi, \vec \zeta \rangle_K := 
       \sum_{e=1}^M \xi_e \, K_e^{-1} \, \zeta_e \, .
       \label{eq:def-inner-K}
\end{equation}
\end{lem}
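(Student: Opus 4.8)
The plan is to verify directly that the two operators in \eqref{eq:def-projector} are complementary projectors whose ranges are the claimed subspaces, and that they are orthogonal with respect to the weighted metric \eqref{eq:def-inner-K}. Since $\matr\Pi_{\rm dir}+\matr\Pi_{\rm cycle}=\eye$ holds by definition, the decomposition $\vec f=\matr\Pi_{\rm dir}\vec f+\matr\Pi_{\rm cycle}\vec f$ is immediate once the projector properties are established. First I would show that $\matr\Pi_{\rm dir}$ is idempotent: substituting $\matr L=\matr E\matr K\matr E^\top$ into $\matr\Pi_{\rm dir}^2=\matr K\matr E^\top\matr L^+(\matr E\matr K\matr E^\top)\matr L^+\matr E$ and invoking the pseudoinverse identity $\matr L^+\matr L\matr L^+=\matr L^+$ collapses the product to $\matr K\matr E^\top\matr L^+\matr E=\matr\Pi_{\rm dir}$; idempotency of $\matr\Pi_{\rm cycle}$ then follows for free. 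I would also note in passing that the hypothesis $\sum_n(\matr E\vec f)_n=0$ is automatically satisfied, since the columns of $\matr E$ have vanishing sum, so the construction applies to every $\vec f$, and that $\vec f^{(d)}=\matr K\matr E^\top(\matr L^+\matr E\vec f)$ is exactly the DC flow produced by the nodal injections $\vec p=\matr E\vec f$.

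Next I would settle the two membership claims. For $\vec f^{(d)}\in\ker(\matr C^\top\matr K^{-1})$, I would compute $\matr C^\top\matr K^{-1}\matr\Pi_{\rm dir}=\matr C^\top\matr E^\top\matr L^+\matr E=(\matr E\matr C)^\top\matr L^+\matr E$, which vanishes because $\matr E\matr C=\matr 0$ by construction of the cycle-edge incidence matrix. For $\vec f^{(c)}\in\ker(\matr E)$, I would show $\matr E\matr\Pi_{\rm cycle}=\matr E-\matr E\matr K\matr E^\top\matr L^+\matr E=\matr E-\matr L\matr L^+\matr E=\matr 0$.

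The one nontrivial point, which I expect to be the main obstacle, is the identity $\matr L\matr L^+\matr E=\matr E$. Since $\matr L\matr L^+$ is the orthogonal projector onto $\operatorname{range}(\matr L)$, this is equivalent to $\operatorname{range}(\matr L)=\operatorname{range}(\matr E)$. I would prove the latter using positive definiteness of $\matr K$: from $\vec x^\top\matr E\matr K\matr E^\top\vec x=0$ one extracts $\matr K^{1/2}\matr E^\top\vec x=\matr 0$, hence $\ker(\matr E\matr K\matr E^\top)=\ker(\matr E^\top)$, and passing to orthogonal complements yields $\operatorname{range}(\matr L)=\operatorname{range}(\matr E)$. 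Every column of $\matr E$ then lies in $\operatorname{range}(\matr L)$, so $\matr L\matr L^+$ fixes it. The remaining effort is routine bookkeeping with the pseudoinverse.

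Finally, for orthogonality of the projectors in the inner product \eqref{eq:def-inner-K}, it suffices to show that $\operatorname{range}(\matr\Pi_{\rm dir})$ is $K$-orthogonal to $\ker(\matr E)$. Every element of the range has the form $\vec u=\matr K\matr E^\top\vec\phi$ for some nodal potential $\vec\phi$, so for any $\vec w\in\ker(\matr E)$ one computes $\langle\vec u,\vec w\rangle_K=\vec\phi^\top\matr E\matr K\matr K^{-1}\vec w=\vec\phi^\top(\matr E\vec w)=0$. The cancellation $\matr K\matr K^{-1}=\eye$ is exactly why the capacity-weighted metric, rather than the Euclidean one, renders the decomposition orthogonal. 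Equivalently, one can check that $\matr K^{-1}\matr\Pi_{\rm dir}=\matr E^\top\matr L^+\matr E$ is a symmetric matrix, which expresses self-adjointness of $\matr\Pi_{\rm dir}$ with respect to $\langle\cdot,\cdot\rangle_K$ and thus identifies it as an orthogonal projector.
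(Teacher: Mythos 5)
Your proposal is correct, and for the one genuinely delicate step it takes a cleaner route than the paper. The routine parts coincide: idempotency of $\matr \Pi_{\rm dir}$ via $\matr L^+ \matr L \matr L^+ = \matr L^+$, complementarity, and $K$-self-adjointness via symmetry of $\matr E^\top \matr L^+ \matr E$ are exactly the paper's computations. The divergence is in showing $\matr \Pi_{\rm cycle}\vec f \in \ker(\matr E)$. The paper only derives $\matr L^+ \matr E\, \matr \Pi_{\rm cycle}\vec f = \vec 0$, uses connectedness to identify $\ker(\matr L^+)$ with the span of $(1,\ldots,1)^\top$, and then invokes the energy-conservation hypothesis to kill the remaining constant. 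You instead prove the stronger matrix identity $\matr E\, \matr \Pi_{\rm cycle} = \matr 0$ by establishing $\matr L \matr L^+ \matr E = \matr E$ from $\operatorname{range}(\matr L) = \operatorname{range}(\matr E)$, which you correctly deduce from positive definiteness of $\matr K$ and symmetry of $\matr L$. This buys two things: it exposes the lemma's standing hypothesis $\sum_n (\matr E \vec f)_n = 0$ as vacuous (the columns of $\matr E$ sum to zero, as you note), and it does not lean on connectedness for this step. You also verify the membership $\vec f^{(d)} \in \ker(\matr C^\top \matr K^{-1})$ via $\matr E \matr C = \matr 0$ --- a claim the lemma makes but the paper's proof never actually checks. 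The paper's explicit verification that $\matr \Pi_{\rm cycle}$ fixes pure cycle flows is not needed for the statement and is in any case subsumed by your argument.
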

\begin{proof} 
We first show that the two matrices are projections and span the entire space:
\begin{enumerate}
        \item The map $\matr \Pi _{\rm dir}$ is a projection, that is 
        \begin{align*}
            \matr \Pi _{\rm dir}^2 &= \matr K\matr E^\top \matr L^{+} \underbrace{\matr E \matr K \matr E^\top}_{=\matr L} \matr L^{+} \matr E \\
            & = \matr K\matr E^\top \matr L^{+} \matr E 
            = \matr \Pi _{\rm dir}.
        \end{align*}
        We here use that the Moore-Penrose pseudo inverse is a weak inverse in the sense that it satisfies $\matr L^+ \matr L \matr L^+  = \matr L^+$. 
        \item The map $\matr \Pi _{\rm cycle}$ is a projection, that is
        \begin{align*}
            \matr \Pi_{\rm cycle}^2 &= \left( \eye - \matr \Pi_{\rm dir} \right)^2 
            = \eye - 2 \matr \Pi_{\rm dir} + \matr \Pi^2_{\rm dir} \\
            &= \eye - \matr \Pi_{\rm dir} = \matr \Pi_{\rm cycle} \, .
        \end{align*}
        \item 
        Finally, is is easy to see that
        $$
            \matr \Pi_{\rm cycle} + \matr \Pi_{\rm dir} = \eye.
        $$
\end{enumerate}
Second, it is easy to see that the projections are orthogonal with respect to the inner product \eqref{eq:def-inner-K},
\begin{equation}
    \langle \vec \xi, \matr \Pi_{\rm dir} \vec \zeta \rangle_K =
    \vec \xi^\top \matr E^\top \matr L^{+} \matr E \vec \zeta =
    \langle \matr \Pi_{\rm dir} \vec \xi,  \vec \zeta \rangle_K \, 
\end{equation}
where we note that $\matr E^\top \matr L^{+} \matr E$ is real symmetric. \\ 
Finally, we show that $\matr \Pi_{\rm cycle}$ has the desired properties.
\begin{enumerate}
        \item For a pure cycle flow $\vec f^{(c)} \in \ker{( \matr E)}$ we have 
        $$ 
        \matr \Pi _{\rm cycle} \vec f^{(c)} = \vec f^{(c)}. 
        $$
        That is, the projector $\matr \Pi _{\rm cycle}$ leaves any cycle flow invariant as desired.        
        \item
        For an arbitrary flow vector $\vec f$ we have
        \begin{align*}
            \matr E \,  \matr \Pi _{\rm cycle} \vec f 
            &= \matr E \left( \eye -  \matr K\matr E^\top \matr L^{+} \matr E  \right) \vec f  \\
            &= \matr E \vec f - \underbrace{\matr E \matr K \matr E^\top}_{=\matr L} \matr L^{+} \matr E  \vec f.
        \end{align*}
        Multiplying with $\matr L^+$ from the left, we get 
        \begin{align*}
        \matr L^+ \matr E \,  \matr \Pi _{\rm cycle} \vec f &= \matr L^+ \matr E \vec f - \matr L^+ \matr L \matr L^{+} \matr E  \vec f \\
             &= \matr L^+ (\matr E \vec f - \matr E \vec f) = \matr L^+ \vec 0 = \vec 0. 
        \end{align*}        
        Hence, we have found that $\matr E \,  \matr \Pi _{\rm cycle} \vec f \in \ker(\matr L^+)$. As we assume that the network is connected, the kernel of $\matr L^+$ is given by the linear subspace $\{c \cdot (1, ..., 1)^T | c \in \mathbf{R} \}$. We thus have that 
        \begin{equation*}
            \matr E \,  \matr \Pi _{\rm cycle} \vec f = c \cdot (1, ..., 1)^T
        \end{equation*}
        for some $c$. \\
        On the other hand, by demanding that the flow is energy conserving we have that $\sum_n (\matr E \,  \matr \Pi _{\rm cycle} \vec f)_n = 0$. This is only fulfilled if $c=0$, that is 
        \begin{equation*}
            \matr E \,  \matr \Pi _{\rm cycle} \vec f = 0.
        \end{equation*}
        We conclude that the projected flow $\matr \Pi _{\rm cycle} \vec f $ is indeed a cycle flow as it is in the kernel of $\matr E$.    
    \end{enumerate}
\end{proof}

\subsection{Resistance Distance}

A common metric in graph theory is the resistance distance or effective resistance~\cite{randic1993resistance}. It is defined via an analogy to DC electric circuits, replacing every edge $a$ with an Ohmic resistor. In this article, we will assume that the conductance of an edge $a$ is given by the coupling strength $K_a$, i.e.,~the resistance is $K_a^{-1}$. 
The resistance distance $\Omega_{m,n}$ between two nodes $n$ and $m$ is then defined as the voltage drop between $n$ and $m$ divided by the total current between $n$ and $m$. In the context of power grids, resistance distances provide essential information about the robustness of synchronized states~\cite{tyloo2018robustness}.

To compute $\Omega_{m,n}$ we consider a unit current injected at node $n$ and withdrawn at node $m$
\begin{equation*}
    \vec\iota_{n,m} = \vec w_n -\vec w_m ,
\end{equation*}
where $\vec w_n$ is the $n$-th standard basis vector, that is $(\vec w_n)_l = \delta_{n,l}$. \\
The nodal voltages are then given by the vector $\vec u = \matr L^+ \vec \iota_{n,m}$ such that we obtain
\begin{align*}
     \Omega_{n,m} &= \vec \iota_{n,m}^\top \vec u \\
     &= (\vec w_n -\vec w_m)^\top \matr L^+ (\vec w_n -\vec w_m).
\end{align*}
If the nodes $n$ and $m$ are connected by an edge $a = (n,m)$, then we have $\vec w_n -\vec w_m = \matr E \vec w_a$ and thus 
\begin{equation}
    \Omega_{a} = \vec w_a^\top \matr E^\top \matr L^+ \matr E \vec w_a.
    \label{eq:Omega-a}
\end{equation}
Inserting $\eye = \matr K^{-1} \matr K$, and recalling the definition of the projectors (Eq.~\eqref{eq:def-projector}) we get 
\begin{align*}
    \Omega_a &= \vec w_a^\top \matr K^{-1} (\eye - \matr \Pi _{\rm cycle}) \vec w_a \\
    &= K_a^{-1} \left(1 - (\matr \Pi _{\rm cycle})_{a,a}\right). 
\end{align*}
We note that $(\matr \Pi _{\rm cycle})_{a,a} \ge 0$ is an indicator for how many fundamental cycles are incident to the edge $a$. This quantity vanishes if the edge is part of no cycle, i.e.,~if it is a bridge.

\subsection{Algebraic formulation and multistability of the real power flow equations}

We now turn back to the real power flow equation \eqref{eq:realpower-intro}. This problem is harder to tackle than the linear power flow due to the nonlinearity introduced by the sine function. In particular, the flows are now given by
\begin{align}
    f_{nm} &= K_{nm}  \sin(\theta_n - \theta_m), \nonumber \\
    \vec f &= \matr K \sin\left( \matr E^\top \vec \theta \right),
    \label{eq:realpower-ft}
\end{align}
where the sine function is taken element-wise. 
In general, the equations admit multiple solutions such that it is a priori unclear how and to which solution a numerical solver will converge. To systematically compute all solutions, one can proceed in two steps \cite{manik2017cycle}. First, we compute all solution candidates, that is all solutions of the KCL \eqref{eq:KCL}. As described before, the general solution can be written as
\begin{align*}
    \vec f = \vec f^{(0)} + \vec f^{(c)} ,
\end{align*} 
where $\vec f^{(c)}$ is an arbitrary cycle flow. 

Now that we have a set of solution candidates, we have to select the actual solutions of the real power equations \eqref{eq:realpower-intro} from this set. Consider an edge $a \equiv (m,n)$. Inverting equation \eqref{eq:realpower-ft}, we can recover the difference of the nodal phase angles at nodes $m$ and $n$ by
\begin{align*}
    \theta_n - \theta_m &= \arcsin\left(  \frac{f_{nm}}{K_{nm}} \right),  \qquad  \mbox{or} \\
    \theta_n - \theta_m &= \pi - \arcsin\left(  \frac{f_{nm}}{K_{nm}} \right).
\end{align*}
modulo $2 \pi$. The first option leads to normal fixed points of Eq.~\eqref{eq:realpower-intro} that are guaranteed to be stable. The second option typically (but not always) yields unstable solutions, hence they will be discarded from now on.  Inserting the cycle flow decomposition $ \vec f = \vec f^{(0)} + \matr C \vec \ell$ then yields
\begin{align*}
    \theta_n - \theta_m = \arcsin\left(  \frac{f_e^{(0)}  + \sum_{\beta=1}^{\nl-\nn+1} C_{e,\beta} \ell_\beta }{K_e} \right).
\end{align*}
Trying to compute all phase angles typically yields a problem as we cannot satisfy this relation simultaneously for all edges in the grid. Only for discrete values of the loop flow amplitudes $\vec \ell$, we get a consistent solution. It has been shown that this is the case if the condition
\begin{equation}
    \sum_{e=1}^{\nl} C_{e,\alpha} \arcsin\left(  \frac{f_e^{(0)}  + \sum_{\beta=1}^{\nl-\nn+1} C_{e\beta} \ell_\beta }{K_e} \right) = 2 \pi z_\alpha
    \label{eq:cycle-condition}
\end{equation}
with $z_\alpha \in \mathbb{Z}$ is satisfied for all fundamental cycles $\alpha = 1,\ldots,\nl-\nn+1$~\cite{manik2017cycle,delabays2017multistability}. The interpretation of this condition is straightforward: If we add up all the differences $\theta_n - \theta_m$ around a cycle, the sum must equal zero or an integer multiple of $2\pi$. Notably, if this condition is satisfied for the fundamental cycles, it is satisfied for all cycles. 

The quantities $z_\alpha$ are referred to as winding numbers, and we summarize them in the winding vector $\vec z = \left( z_1, \ldots, z_{\nl-\nn+1} \right)^\top$. These winding numbers are particularly useful to characterize the solutions of the real power flow equations as they are unique \cite[Theorem 4.1]{jafarpour2022flow}. That is, each normal solution corresponds to one unique value of the winding vector $\vec z \in \mathbb{Z}^{\nl-\nn+1}$. Moreover, the number of possible winding vectors is finite. From Eq.~\eqref{eq:cycle-condition} we see that solutions can only be found if
\begin{align}
    |z_{\alpha}| \le \frac{\mbox{number of edges in cycle } \, \alpha}{4} \, .
    \label{eq:z-max}
\end{align}
Generalizations of this approach to power grids with Ohmic losses have been proposed in \cite{balestra2019multistability,delabays2022multistability}.

A summary of the symbols and variables used in this article is provided in Table~\ref{tab:notation} to improve the readability.

\begin{table}[tb]
\caption{
List of symbols and  variables. Vectors are written as boldface lowercase Roman letters and matrices as boldface uppercase case Roman letters. Calligraphic letters are used to denote objective and Lagrangian functions, while Gothic-type letters denote sets and graphs.
}
\label{tab:notation}
\begin{tabular}{p{1.7cm} p{6.3cm}}
\hline
   $e,a,b$ & indices of lines/edges \\
   $\alpha,\beta, \varphi$ & indices of (fundamental) cycles \\
   $C_{a,\beta}$ & entry of the edge cycle incidence matrix \\
   $\matr{C}$ & the edge cycle incidence matrix \\
   $\EE$ & set of all lines or edges in the grid \\
   $E_{ne}$ & entry of the node edge incidence matrix \\
   $\matr{E}$ & the node edge incidence matrix \\
   $f_e$ & real power flow on line $e$ \\
   $\OO$ & objective function \\
   $\vec f$ & vector of all real-power flows \\
   $\GG$ & a graph or network \\
   $K_e$ & coupling strength of a line $e$ (proportional to the inverse of the reactance $X_e^{-1}$) \\
   $\matr K$ & diagonal matrices of coupling constants \\
   $l,m,n$ & indices of nodes \\
   $\matr{L}$ & Laplacian matrix \\
   $\LL$ & Lagrange function \\
   $\ell_\beta$ & loop flow amplitude on the cyle $\beta$ \\
   $\vec{\ell}$ & vector of all loop flow amplitudes\\
   $p_n$ & real power injection at node $n$ \\
   $\vec p$ & vector of all real-power injections \\
   $\theta_n$ & voltage phase angle at node $n$ \\
   $\VV$ & set of all nodes or vertices in the grid\\
  \hline
\end{tabular}
\end{table}

\section{Real power flow from optimization}
\label{sec:powerflow-opt}

In this section, we introduce a convex optimization-based formulation of the nonlinear real power flow equations
\begin{equation}
    p_n = \sum_{m \in \VV} K_{nm} \sin(\theta_n - \theta_m), \qquad
    \forall m \in \VV.
    \label{eq:realpower-section}
\end{equation}
We will demonstrate that this approach allows to compute \emph{all} normal solution, and we will discuss several approximation schemes.
Notably, our approach focuses on the lines and cycles of the grid instead of the nodes.

\subsection{Solutions by convex optimization}

We consider the optimization problem
\begin{align}
    \label{opt:realpower1}
    & \min_{\vec f} \OO_{\rm rp}(\vec f) \\
    & \mathrm{s.t.}  -K_e \le f_e \le K_e, 
         \qquad \forall \, e \in \EE \label{opt:polytope} \\
    & \qquad p_n = \sum_{e} E_{ne} f_e 
        \qquad \forall \, n \in \VV \label{opt:affine}
\end{align}
with the objective function
\begin{align}
    \OO_{\rm rp}(\vec f) = \sum_{e \in \EE} f_e \arcsin \left( \frac{f_e}{K_e} \right) + \sqrt{K_e^2 - f_e^2} - K_e.
\end{align}

\begin{lem}
The optimization problem \eqref{opt:realpower1} is either infeasible or it has a unique solution. 
\end{lem}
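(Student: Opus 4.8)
The plan is to separate the two assertions---existence of a minimizer whenever the problem is feasible, and uniqueness of that minimizer---and to reduce both to standard facts from convex analysis. First I would note that the feasible set cut out by \eqref{opt:polytope} and \eqref{opt:affine} is the intersection of the box $\prod_{e \in \EE} [-K_e, K_e]$ with the affine subspace $\{\vec f : \matr E \vec f = \vec p\}$. The box is compact and convex, and intersecting it with a closed affine subspace preserves both properties, so the feasible set is compact and convex. If it is empty, the problem is infeasible by definition, which is exactly the first alternative; so from here on I assume it is non-empty.

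For existence I would invoke the Weierstrass theorem. The objective $\OO_{\rm rp}$ is continuous on the closed box---in particular it stays finite at the endpoints $f_e = \pm K_e$, where $\arcsin(\pm 1) = \pm \pi/2$ and the square root vanishes---so it attains its infimum on the compact feasible set. The real content lies in uniqueness, which I would derive from \emph{strict} convexity of $\OO_{\rm rp}$. Because the objective is separable, $\OO_{\rm rp}(\vec f) = \sum_{e \in \EE} g(f_e)$ with $g(f) = f \arcsin(f/K_e) + \sqrt{K_e^2 - f^2} - K_e$, it suffices to show each $g$ is strictly convex on $[-K_e, K_e]$: a sum of coordinatewise strictly convex terms is strictly convex, since any two distinct flow vectors differ in at least one coordinate.

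The key computation---the step I expect to carry the argument---is that the derivative simplifies dramatically. Differentiating $g$ by the product rule, the contribution of the $\arcsin$ term and the contribution of the square-root term cancel, leaving simply $g'(f) = \arcsin(f/K_e)$, hence $g''(f) = (K_e^2 - f^2)^{-1/2} > 0$ on the open interval. Since $\arcsin(\cdot/K_e)$ is continuous and strictly increasing on the closed interval $[-K_e, K_e]$, the function $g$ is strictly convex there. I would emphasize that the one delicate point is the boundary $f_e = \pm K_e$, where $g''$ diverges; I would therefore argue strict convexity directly from the strict monotonicity of $g' = \arcsin(\cdot/K_e)$ rather than from $g''$, so that the endpoints cause no trouble.

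With strict convexity in hand, uniqueness follows from the standard midpoint argument: if $\vec f_1 \neq \vec f_2$ were both minimizers, convexity of the feasible set would place $\tfrac12(\vec f_1 + \vec f_2)$ in it, while strict convexity would give $\OO_{\rm rp}(\tfrac12(\vec f_1 + \vec f_2)) < \tfrac12 \OO_{\rm rp}(\vec f_1) + \tfrac12 \OO_{\rm rp}(\vec f_2)$, contradicting minimality of the common value. Thus, when the feasible set is non-empty, the minimizer exists and is unique, completing the dichotomy.
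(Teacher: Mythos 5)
Your proof is correct and follows essentially the same route as the paper: convexity of the feasible set (polytope intersected with an affine space) plus strict convexity of the separable objective via the computation $g''(f)=(K_e^2-f^2)^{-1/2}>0$. You are in fact somewhat more careful than the paper, which omits the Weierstrass existence step and concludes strict convexity from positive definiteness of the Hessian on the interior only, whereas you handle the boundary $f_e=\pm K_e$ (where $g''$ diverges) by arguing from the strict monotonicity of $g'=\arcsin(\cdot/K_e)$.
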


\begin{proof}
The feasible space is the intersection of a polytope (defined by Eq.~\eqref{opt:polytope}) and an affine space (defined by Eq.~\eqref{opt:affine}) and, therefore, convex.
The objective function is strictly  convex. We show this statement  by computing its Hesse matrix,
\begin{align*}
    \frac{\partial^2 \OO_{\rm rp}}{\partial f_a \partial f_b}
    = \left\{ \begin{array}{l l l} 
      (K_a^2 - f_a^2)^{-1/2}  & \; \mbox{for} \; &
      a=b \\
      0 & & a \neq b.
    \end{array} \right.
\end{align*}
Hence the Hesse matrix is positive definite on the interior of the feasible space. 
We conclude that the optimization problem is either infeasible or it has a unique solution. 
\end{proof}

\begin{thm}
\label{thm:realpower1}
If the optimization problem \eqref{opt:realpower1} is primarily feasible and the solution satisfies 
\begin{align*}
    |f_e| < K_e \qquad \forall \, e \in \EE,
\end{align*}
then the solution coincides with a normal solution of the real power flow equations \eqref{eq:realpower-section} where the phases $\theta_n$ correspond to the Lagrangian multipliers of the problem.
\end{thm}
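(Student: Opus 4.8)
The plan is to treat \eqref{opt:realpower1} as a convex program and read off its Karush–Kuhn–Tucker (KKT) conditions, exploiting the fact that the gradient of $\OO_{\rm rp}$ is precisely the $\arcsin$ nonlinearity. A direct differentiation shows that the two $f_e$-dependent terms coming from the square root cancel, leaving
\begin{equation*}
    \frac{\partial \OO_{\rm rp}}{\partial f_e} = \arcsin\left(\frac{f_e}{K_e}\right),
\end{equation*}
so that $\nabla \OO_{\rm rp}$ reproduces the inverse of the sine coupling. This is the structural observation that makes the whole construction work, and it is presumably how the objective was reverse-engineered in the first place.

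Next I would set up the Lagrangian, introducing multipliers $\theta_n$ for the affine equality constraints \eqref{opt:affine} and nonnegative multipliers $\mu_e^\pm$ for the box constraints \eqref{opt:polytope},
\begin{equation*}
    \LL = \OO_{\rm rp}(\vec f) + \sum_{n\in\VV}\theta_n\Bigl(p_n - \sum_e E_{ne}f_e\Bigr) + \sum_{e\in\EE}\bigl(\mu_e^+(f_e-K_e) + \mu_e^-(-f_e-K_e)\bigr).
\end{equation*}
Because the objective is strictly convex and all constraints are affine, the linear constraint qualification holds automatically, so the KKT conditions are both necessary and sufficient to characterize the unique optimizer. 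Stationarity $\partial\LL/\partial f_e = 0$ then yields
\begin{equation*}
    \arcsin\left(\frac{f_e}{K_e}\right) - (\matr E^\top\vec\theta)_e + \mu_e^+ - \mu_e^- = 0.
\end{equation*}

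The decisive step is to invoke the hypothesis $|f_e| < K_e$: the box constraints are then strictly inactive, so complementary slackness forces $\mu_e^+ = \mu_e^- = 0$ for every edge. What remains is $\arcsin(f_e/K_e) = (\matr E^\top\vec\theta)_e = \theta_n - \theta_m$ for $e=(n,m)$, which inverts to $f_e = K_e\sin(\theta_n-\theta_m)$. Substituting this into the equality constraint \eqref{opt:affine} recovers exactly the real power flow equations \eqref{eq:realpower-section}, with the equality-constraint multipliers playing the role of the nodal phases. Finally, since $|f_e|<K_e$ gives $\arcsin(f_e/K_e)\in(-\pi/2,\pi/2)$, the phase differences obey $|\theta_n-\theta_m|<\pi/2$ and hence $\cos(\theta_n-\theta_m)>0$, which is the normality condition \eqref{eq:cond-normal}; the solution is therefore a normal, and thus stable, fixed point.

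I expect the main obstacle to be bookkeeping around the gauge freedom rather than any hard analysis. The equality constraints are linearly dependent — summing them gives $0=0$ by $\sum_n p_n = 0$ together with the vanishing column sums of $\matr E$ — so the multiplier vector $\vec\theta$ is determined only up to a global additive constant, mirroring the rank-$(N-1)$ deficiency of $\matr L$ and the physical phase gauge. I would note explicitly that this nonuniqueness of $\vec\theta$ is harmless, since only the differences $\theta_n-\theta_m$ enter the flow equations. A secondary point worth stating carefully is that both the derivative computation and the strict convexity require staying in the open region $|f_e|<K_e$, which is exactly where the theorem places the optimizer, so the KKT stationarity condition may be differentiated legitimately there.
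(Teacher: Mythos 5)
Your proposal is correct and follows essentially the same route as the paper: set up the Lagrangian, invoke a constraint qualification so the KKT conditions hold at the optimum, use $|f_e|<K_e$ with complementary slackness to eliminate the box-constraint multipliers, and invert the stationarity condition $\arcsin(f_e/K_e)=\theta_n-\theta_m$ to recover Eq.~\eqref{eq:realpower-section}. Your added remarks on the gauge freedom of the multipliers and the explicit verification of the normality condition $\cos(\theta_n-\theta_m)>0$ are correct refinements that the paper leaves implicit.
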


\begin{proof}
We solve the optimization problem by introducing the Lagrangian function
\begin{align*}
    \LL(\vec f) =& \OO_{\rm rp}(\vec f) 
     + \sum_{e \in \EE} \mu_e (f_e-K_e) 
     + \sum_{e \in \EE} \nu_e (-f_e-K_e) \\
     & + \sum_{n \in \VV} \lambda_n \left( p_n - 
     \sum_{e \in \EE} E_{ne} f_e \right).
\end{align*}
The optimization problem satisfies the Slater condition, hence every optimum fulfills the KKT conditions:
\begin{enumerate}
    \item Stationarity with respect to $f_e$:
    \begin{align*}
        \frac{\partial \LL}{\partial f_e}
        = \arcsin \left( \frac{f_e}{K_e} \right) + (\mu_e - \nu_e) - \sum_n \lambda_n E_{ne} = 0
    \end{align*}
    for all edges $e \in \EE$,
    \item Primal feasibility:
    \begin{align*}
    &-K_e \le f_e \le K_e, 
         \qquad \forall \, e \in \EE \nonumber \\
    & p_n = \sum_{e} E_{ne} f_e 
        \qquad \forall \, n \in \VV \nonumber,
    \end{align*}
    \item Dual Feasibility:
    \begin{align*}
        \mu_e \ge 0, \; \nu_e \ge 0 
        \qquad \forall \, e \in \EE, 
    \end{align*}
    \item Complementary slackness:
    \begin{align*}
        \mu_e (f_e-K_e) =0, \; \nu_e (-f_e-K_e) = 0
        \qquad \forall \, e \in \EE.
    \end{align*}
\end{enumerate}
By assumption we have $|f_e| < K_e$ such that complementary slackness condition yields $\mu_e = \nu_e=0$ for all edges $e \in \EE$. The stationarity condition with respect to the variables $f_e$ then reads
\begin{align*}
    & \arcsin \left( \frac{f_e}{K_e} \right) = \sum_n E_{ne} \lambda_n \\
    \Leftrightarrow \quad &
    f_e = K_e \sin\left( \sum_n E_{ne} \lambda_n \right),
\end{align*}
which we substitute into the equality constraint
\begin{align*}
    p_m = \sum_{e \in \EE} E_{me} K_e \sin\left( \sum_n E_{ne} \lambda_n \right).
\end{align*}
Now we switch our notation and label the edges by their respective endpoints. Using the structure of the node-edge incidence matrix, the last equation can be recast into the form 
\begin{align*}
    p_m = \sum_{n \in \VV}  K_{nm} \sin\left( \lambda_n - \lambda_m \right).
\end{align*}
This coincides with the real power flow equations \eqref{eq:realpower-section} if we identify $\lambda_n$ and $\theta_n$.
\end{proof}

Theorem \ref{thm:realpower1} provides a systematical approach to the real power flow equations and the stationary states of oscillator networks. The optimization problem \eqref{opt:realpower1} either has a unique solution or no solution at all. However, we know that the Kuramoto model is generally multistable, i.e., it can have multiple stable fixed points. Can we get them all by convex optimization? 

As stressed above we will restrict ourselves to normal operation fixed points for the time being. In Sec.~\ref{sec:notation-review} we have shown that all solutions of the real power flow equations can be obtained from the general solution of the KCL
\begin{align*}
    \vec f = \vec f^{(0)} + \matr C \, \vec \ell.
\end{align*}
Here, $\vec f^{(0)}$ denotes a special solution of the KCL -- for instance the one that is obtained from solving the optimization problem \eqref{opt:realpower1}. Then the cycle flow amplitudes $\vec \ell$ must satisfy the conditions 
\begin{align}
    \sum_e C_{e \beta} \arcsin\left( \frac{f_e}{K_e} \right) = 2 \pi z_\beta ,
    \label{eq:cyclecon2}
\end{align}
for all fundamental cycles $\beta$. Notably, the winding vector $\vec z = (z_1, \ldots z_{M-N+1})$ is unique for every fixed point as discussed above. Hence, we can actually try to define a specific convex optimization problem for a given $\vec z$, whose unique solution then reproduces the condition \eqref{eq:cyclecon2}. Indeed, this is possible as we will show now.

Consider the optimization problem
\begin{align}
    \label{opt:realpower2-cycle}
    & \min_{\vec \ell} \OO_{\vec z}(\vec \ell) \\
    & s.t.  -K_e \le f^{(0)}_{e} + \sum_\beta  C_{e \beta} \ell_\beta \le K_e, 
         \qquad \forall \, e \in \EE \nonumber
\end{align}
with the objective function
\begin{align}
    &\OO_{\vec z}(\vec \ell) =
     \sum_{e \in \EE}  \sqrt{K_e^2 - (f^{(0)}_{e} + \sum_\beta  C_{e \beta} \ell_\beta )^2} - K_e \nonumber \\
    & + \sum_{e \in \EE} 
    \left( f^{(0)}_{e} + \sum_\beta  C_{e \beta} \ell_\beta   \right)
    \arcsin \left( \frac{f^{(0)}_{e} + \sum_\beta  C_{e \beta} \ell_\beta}{K_e} \right) \nonumber \\ 
    & - \sum_{\beta}2 \pi z_\beta \ell_\beta .
    \label{eq:objective-z}
\end{align}

\begin{lem}
The optimization problem \eqref{opt:realpower2-cycle} is either infeasible or it has a unique solution. 
\end{lem}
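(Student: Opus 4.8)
The plan is to mirror the proof of the analogous lemma for problem \eqref{opt:realpower1}: establish that the feasible set is convex and that the objective is strictly convex, from which the dichotomy (infeasible, or a unique minimizer) follows. The crucial difference is that the optimization now runs over the loop-flow amplitudes $\vec\ell$ rather than over the flows $\vec f$ directly, so I must track how the affine substitution $f_e(\vec\ell) = f^{(0)}_e + \sum_\beta C_{e\beta}\ell_\beta$ interacts with convexity. My first move would be to observe that the objective decomposes as $\OO_{\vec z}(\vec\ell) = \OO_{\rm rp}(\vec f(\vec\ell)) - 2\pi\,\vec z^\top\vec\ell$, i.e. the previously analyzed objective composed with the affine map $\vec\ell \mapsto \vec f^{(0)} + \matr C\vec\ell$, plus a purely linear term in $\vec\ell$. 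The feasible set is then the preimage of the box $\{-K_e \le f_e \le K_e\}$ under this affine map; being cut out by affine inequalities in $\vec\ell$, it is closed and convex.

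Next I would address convexity of the objective by computing its Hessian in $\vec\ell$. By the chain rule and the diagonal Hessian of $\OO_{\rm rp}$ established in the preceding lemma (with entries $(K_e^2 - f_e^2)^{-1/2}$), this Hessian equals $\matr C^\top \matr H \matr C$, where $\matr H$ is the diagonal, positive-definite matrix of those entries; the linear term $-2\pi\,\vec z^\top\vec\ell$ contributes nothing. The key step -- and the one I expect to be the main obstacle -- is promoting positive semidefiniteness to \emph{strict} positive definiteness. This requires $\matr C$ to have full column rank, which holds precisely because its columns were chosen as a basis of $\ker(\matr E)$ and are therefore linearly independent. With $\matr C$ injective and $\matr H$ positive definite, the product $\matr C^\top \matr H \matr C$ is positive definite, so $\OO_{\vec z}$ is strictly convex.

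Finally I would secure existence, to upgrade ``at most one minimizer'' to ``a unique solution''. Since $\matr C$ is injective, $\|\matr C\vec\ell\| \ge \sigma_{\min}(\matr C)\,\|\vec\ell\|$ with $\sigma_{\min}(\matr C) > 0$; because $\vec f(\vec\ell)$ is confined to the bounded box, the feasible set in $\vec\ell$ is bounded as well, hence compact. The objective is continuous on this closed set (the arcsine is continuous on $[-1,1]$ and the remaining terms extend continuously to the boundary $f_e = \pm K_e$), so a continuous strictly convex function on a nonempty compact convex set attains its minimum at exactly one point. I would then conclude that problem \eqref{opt:realpower2-cycle} is either infeasible or has a unique solution. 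The only subtle point worth double-checking is that the reduction of strict convexity to the rank of $\matr C$ is genuinely what makes the cycle-basis choice indispensable here; an overcomplete or dependent set of cycles would destroy injectivity and hence uniqueness.
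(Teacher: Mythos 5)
Your proof is correct and follows essentially the same route as the paper: convexity of the feasible set as an affine preimage of the box, and strict convexity of the objective via the Hessian $\matr C^\top \matr H \matr C$ with $\matr H$ diagonal and positive definite on the interior. You in fact supply two details the paper leaves implicit -- that strict positive definiteness hinges on $\matr C$ having full column rank (since its columns form a basis of $\ker(\matr E)$), and that existence of the minimizer follows from compactness of the feasible set -- both of which strengthen rather than change the argument.
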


\begin{proof}
The feasible space is the intersection of a polytope and an affine space and, therefore, convex (or empty).
The objective function is strictly convex. To prove this statement, we compute the Hesse matrix
\begin{align*}
    \frac{\partial \OO_{\vec z}}{\partial \ell_\alpha}
    &=  \sum_e C_{e \alpha} \arcsin \left( \frac{f^{(0)}_{e} + \sum_{\varphi} C_{e \varphi} \ell_\varphi}{K_e} \right)
    - 2 \pi z_\alpha \\
    \frac{\partial^2 \OO_{\vec z}}{\partial \ell_\alpha \partial \ell_\beta}
    &= \sum_e \frac{C_{e \alpha} C_{e \beta}}{\sqrt{K_e^2 - (f^{(0)}_{e} + \sum_\varphi  C_{e \varphi} \ell_\varphi )^2}} \, .
\end{align*}
Converting this result to a matrix form, the Hesse matrix reads
\begin{align}
    \nabla^2 \OO_{\vec z} = \matr{C}^\top
    \mbox{diag} \left( \frac{1}{\sqrt{K_e^2 - (f^{(0)}_{e} + \sum_\varphi  C_{e \varphi} \ell_\varphi )^2}}
    \right) \matr{C} \, .
    \label{eqn:Hesse-ell}
\end{align}
We see that the Hesse matrix is positive definite on the interior of the feasible set where 
\begin{align*}
    \left| f^{(0)}_{e} + \sum_\beta  C_{e \beta} \ell_\beta \right| < K_e \, .
\end{align*}
whereas it becomes singular on the boundary. We conclude that the optimization problem is either infeasible or has a unique solution.
\end{proof}

\begin{thm}
\label{thm:realpower2-cycle}
If the optimization problem \eqref{opt:realpower2-cycle} is primarily feasible and the solution satisfies 
\begin{align*}
    |f^{(0)}_{e} + \sum_\beta  C_{e \beta} \ell_\beta| < K_e \qquad \forall \, e \in \EE,
\end{align*}
then the solution coincides with a normal solution of the real power flow equations \eqref{eq:realpower-section} with the winding vector $\vec z = (z_1, \ldots , z_{M-N+1})$.
\end{thm}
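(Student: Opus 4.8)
The plan is to follow the template of Theorem~\ref{thm:realpower1} almost verbatim, with the crucial simplification that the gradient of $\OO_{\vec z}$ has already been computed in the lemma immediately preceding the theorem. Note first that problem~\eqref{opt:realpower2-cycle} carries \emph{only} box constraints in the variable $\vec\ell$, with no separate equality constraint, because the parametrization $\vec f = \vec f^{(0)} + \matr C \vec\ell$ already enforces Kirchhoff's current law. First I would attach multipliers $\mu_e,\nu_e \ge 0$ to the two affine constraints $f^{(0)}_e + \sum_\beta C_{e\beta}\ell_\beta \le K_e$ and $-(f^{(0)}_e + \sum_\beta C_{e\beta}\ell_\beta) \le K_e$ and form the Lagrangian. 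Since these inequalities are affine in $\vec\ell$, the linearity constraint qualification holds automatically, so the KKT conditions are necessary at the minimizer; alternatively, the strictly feasible optimum guaranteed by the hypothesis certifies Slater's condition outright.

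The interior assumption is what does the real work. Because the optimal flow satisfies $|f^{(0)}_e + \sum_\beta C_{e\beta}\ell_\beta| < K_e$ strictly on every edge, no inequality constraint is active, so complementary slackness forces $\mu_e = \nu_e = 0$ for all $e$. Stationarity then collapses to $\nabla \OO_{\vec z}(\vec\ell) = \vec 0$, and inserting the gradient already derived in the preceding lemma gives, component-wise for each fundamental cycle $\alpha$,
\begin{equation*}
    \sum_{e\in\EE} C_{e\alpha}\,\arcsin\!\left(\frac{f^{(0)}_e + \sum_\beta C_{e\beta}\ell_\beta}{K_e}\right) = 2\pi z_\alpha .
\end{equation*}
This is precisely the cycle condition~\eqref{eq:cyclecon2} with the prescribed winding vector $\vec z$; indeed, the linear term $-\sum_\beta 2\pi z_\beta \ell_\beta$ in the objective was engineered so that its gradient contributes exactly the right-hand side.

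To conclude I would translate the optimal flow back into phase angles. The feasible flow obeys KCL automatically, since $\matr E \vec f = \matr E \vec f^{(0)} + \matr E \matr C \vec\ell = \vec p$ using $\matr E \matr C = \matr 0$. The cycle condition just derived is exactly the consistency requirement established in Sec.~\ref{sec:notation-review}: it guarantees that the edgewise relations $\theta_n - \theta_m = \arcsin(f_{nm}/K_{nm})$ can be integrated into a single-valued phase vector $\vec\theta$ (fix the phases along a spanning tree, then verify every fundamental cycle closes). The resulting $\vec\theta$ solves the real power flow equations~\eqref{eq:realpower-section}, and since $|f_e/K_e| < 1$ forces each $\theta_n - \theta_m \in (-\pi/2,\pi/2)$ and hence $\cos(\theta_n-\theta_m) > 0$, the solution is automatically normal, carrying winding vector $\vec z$ by construction.

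The main obstacle is not the KKT bookkeeping, which is essentially identical to Theorem~\ref{thm:realpower1}, but the final integration step: one must be certain that satisfying the cycle condition on a \emph{basis} of fundamental cycles suffices to reconstruct a globally consistent $\vec\theta$ on the whole connected graph, and that the associated winding vector is well defined. This is exactly the fact imported from Sec.~\ref{sec:notation-review}, so the honest work lies in invoking that consistency on the fundamental cycles propagates to every cycle and that the winding vector uniquely labels the solution.
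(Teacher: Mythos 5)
Your proposal is correct and follows essentially the same route as the paper: form the Lagrangian with multipliers only on the box constraints (KCL being enforced by the parametrization $\vec f = \vec f^{(0)} + \matr C\vec\ell$), use the interior hypothesis and complementary slackness to eliminate the multipliers, and read off the stationarity condition as the cycle condition \eqref{eq:cycle-condition} with winding vector $\vec z$. The additional remarks on reconstructing $\vec\theta$ along a spanning tree and on normality via $|f_e|<K_e$ only make explicit what the paper imports from Sec.~\ref{sec:notation-review}.
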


\begin{proof}
We solve the optimization problem by introducing the Lagrangian function
\begin{align*}
    \LL(\vec \ell) =& \OO_{\vec z}(\vec \ell) \\
     & + \sum_{e \in \EE} \mu_e \left( f^{(0)}_{e} + \sum_\beta  C_{e \beta} \ell_\beta  -K_e \right) \\
     & + \sum_{e \in \EE} \nu_e \left( -f^{(0)}_{e} - \sum_\beta  C_{e \beta} \ell_\beta - K_e \right).
\end{align*}
The optimization problem satisfies the Slater condition, hence every optimum fulfills the KKT conditions:
\begin{enumerate}
    \item Stationarity with respect to $\ell_\alpha$:
    \begin{align*}
        & 0 = \frac{\partial \LL}{\partial \ell_\alpha} = - 2 \pi z_\alpha \\        
        & +\sum_e C_{e \alpha} \left[ \arcsin \left( \frac{f^{(0)}_{e} + \sum_\beta  C_{e \beta} \ell_\beta }{K_e} \right) + (\mu_e - \nu_e) \right].
    \end{align*}
    for all fundamental cycles $\beta$,
    \item Primal feasibility:
    \begin{align*}
          -K_e \le f^{(0)}_{e} + \sum_\beta  C_{e \beta} \ell_\beta \le K_e
    \end{align*}
    \item Dual Feasibility:
    \begin{align*}
        \mu_e \ge 0, \; \nu_e \ge 0 
        \qquad \forall \, e \in \EE, 
    \end{align*}
    \item Complementary slackness:
    \begin{align*}
        &\mu_e \left( f^{(0)}_{e} + \sum_\beta  C_{e \beta} \ell_\beta  -K_e \right) =0, \\ 
        &\nu_e \left( - f^{(0)}_{e} - \sum_\beta  C_{e \beta} \ell_\beta - K_e \right) = 0,
        \qquad \forall \, e \in \EE.
    \end{align*}
\end{enumerate}
By assumption we have $|f^{(0)}_{e} + \sum_\beta  C_{e \beta} \ell_\beta| < K_e$ for all $e \in \EE$ such that complementary slackness condition yields $\mu_e = \nu_e=0$ for all edges $e \in \EE$. The stationarity condition with respect to the variables $\ell_\alpha$ then reads
\begin{equation}
    \sum_{e=1}^{\nl} C_{e,\alpha} \arcsin\left(  \frac{f_e^{(0)}  + \sum_{\beta=1}^{\nl-\nn+1} 
    C_{e, \beta}
    \ell_\beta }{K_e} \right) = 2 \pi z_\alpha \, .
    \label{eq:stat-vs-cycle}
\end{equation}
Hence, the solution of the optimization problem satisfies the two conditions for the  normal solution of the real power flow equations given in Sec.~\ref{sec:notation-review}. First, it satisfies the KCL by construction. Second, the stationarity condition \eqref{eq:stat-vs-cycle} coincides with the cycle condition \eqref{eq:cycle-condition}.
\end{proof}

We close this section with three remarks on the implications of the two theorems established above. First, the optimization problem \eqref{opt:realpower1} yields a solution with winding vector $\vec z = \vec 0$ (or no suitable solution at all). To see this, use this solution as the reference $\vec f^{(0)}$ in the optimization problem \eqref{opt:realpower2-cycle} with $\vec z =  \vec 0$. Then the minimizer is simply given by $\vec \ell = \vec 0$. Hence, the solution $\vec f^\star = \vec f^{(0)} + \matr C \vec \ell = \vec f^{(0)}$ has the winding number $\vec z = \vec 0$. Normal solutions with $\vec z=0$ are by far the most important ones in practice -- all other ones correspond to rather exotic states with loop flows, see \cite{coletta2016topologically} for a discussion. 

Second, we emphasize that theorem \ref{thm:realpower2-cycle} provides a systematic approach to compute \emph{all} normal solutions of the real power flow equations \eqref{eq:realpower-section}. More precisely, we can find the following scenarios:
\begin{enumerate}
    \item Given the network and the real power injections $\vec p$, the optimization problems may be infeasible. That is, there is no solution of the KCL $\matr E \vec f = \vec p$ that satisfies the line limits $|f_e|\le K_e$. In this case, the grid simply does not have enough capacity to transmit the power from generators to consumers. Whether this is the case can be determined in a systematic way by graph-theoretical methods, see appendix \ref{app:kcl}.
    \item If there is a solution $\vec f^{(0)}$ of the KCL satisfying the line limits, then the optimization problem \eqref{opt:realpower2-cycle} has a unique solution for each winding vector $\vec z$. As the winding vector is unique, we can hence systematically compute the respective solution and also decide whether it exists at all by distinguishing three cases:
    \begin{enumerate}
        \item If the solution is in the interior of the feasible set (i.e., $|f_e| < K_e$ for all lines $e$), then we have found the correct solution for the given $\vec z$. 
        \item If the solution lies on the boundary of the feasible space (i.e., $|f_e| = K_e$ for at least one line $e$), then generally no solution with the given winding vector $\vec z$ exists. 
        \item In special case, we may find a solution with $|f_e| = K_e$ where the respective KKT multiplier ($\mu_e$ or $\nu_e$) vanishes, too. Then we have found a valid solution of the real power flow equations. This case typically corresponds to a bifurcation point, where the solution will vanish upon the variation of a system parameter.
    \end{enumerate}
\end{enumerate}

Finally, we remark that the suggested procedure to compute all normal solutions may still be computationally hard, depending on the topology of the network. For every fundamental cycle $\alpha$, the number of allowed values $z_\alpha$ is finite according to Eq.~\eqref{eq:z-max}. The number of allowed winding vectors $\vec z$ is thus also finite, but it can grow exponentially with the number of cycles.

\section{The linear power flow approximation}
\label{sec:DCapprox}

The linear power flow or DC approximation is widely used in practical approximations. Here, one simply linearizes the sine function in Eq.~\eqref{eq:realpower-section} and obtains a system of linear equations
\begin{equation}
    p_n = \sum_{m \in \VV} K_{nm} (\theta_n - \theta_m), \qquad
    \forall m \in \VV.
    \label{eq:dcapprox-section}
\end{equation}
which is easily solved for the nodal phase angles as discussed in Sec.~\ref{sec:notation-review}. Here we discuss how this approximation relates to our optimization approach and how this approximation may be refined.

\subsection{The linear power flow as an optimization problem}

We first note that the linear power flow equations can also be obtained from a convex optimization problem:
\begin{align}
    \label{opt:dcapprox1}
    & \min_{\vec f} \OO_{\rm lin}(\vec f) \\
    & \qquad s.t.~p_n = \sum_{e} E_{ne} f_e 
        \qquad \forall \, n \in \VV \nonumber
\end{align}
with the objective function
\begin{align}
    \OO_{\rm lin}(\vec f) = \sum_{e \in \EE} \frac{f_e^2}{2 K_e} \, .
\end{align}
To see this, we introduce the Lagrangian
\begin{align*}
    \LL(\vec f) =& \OO_{\rm lin}(\vec f) 
     + \sum_{n \in \VV} \lambda_n \left( p_n - 
     \sum_{e \in \EE} E_{ne} f_e \right).
\end{align*}
The stationarity condition with respect to $f_e$ then reads:
\begin{align*}
    \frac{\partial \LL}{\partial f_e}
    = \frac{f_e}{K_e} - \sum_n \lambda_n E_{ne} = 0
\end{align*}
for all edges $e \in \EE$. Substituting this result into the equality constraints yields
\begin{align*}
    p_m = \sum_{e \in \EE} E_{me} K_e \left( \sum_n E_{ne} \lambda_n \right).
\end{align*}
Switching our notation and labeling the edges by their respective endpoints, we can recast this equation into the form 
\begin{align*}
    p_m = \sum_{n \in \VV}  K_{nm} \left( \lambda_n - \lambda_m \right).
\end{align*}
This coincides with the linear power flow equations \eqref{eq:dcapprox-section} if we identify $\lambda_n$ and $\theta_n$.

Now how does the optimization problem \eqref{opt:dcapprox1} relate to the previous problem \eqref{opt:realpower1}? By straightforward computation, one can show that the objective function $\OO_{\rm lin}$ is just the leading order Taylor expansion of the original objective function $\OO_{\rm rp}$ around the ``empty grid'' $\vec f = \vec 0$:
\begin{align*}
    \OO_{\rm rp}(\vec f)= \OO_{\rm lin}(\vec f) + \mathcal{O}(f_e^4).
\end{align*}
Furthermore, the optimization problem \eqref{opt:dcapprox1} neglects the inequality constraints, i.e.,~the line limits.

\subsection{Improving on the linear power flow approximation}
\label{sec:improving-dc-newton}

The previous insights provide a method to improve the linear power flow approximation. In the first step we linearize the objective $\OO_{\rm rp}(\vec f)$ around $\vec 0$ and obtain the linear power flow $\vec f^{\rm lin}$. We can repeat this idea and do another Taylor expansion, but this time around the previous approximate solution $\vec f^{\rm lin}$. 

Let's evaluate this idea. The linear power flow solution is given by \begin{align}
    \vec \theta^{\rm(lin)} &= \matr L^{+} \vec p, \nonumber \\
    \vec f^{\rm(lin)}  &= \matr K \matr E^\top \vec \theta^{\rm(lin)} \, .
    \label{eq:dcapprox-vec}
\end{align}
For a given edge $e = (n,m)$, we further define $\theta_e^{\rm(lin)} = \theta^{\rm(lin)}_n - \theta^{\rm(lin)}_m$.
Furthermore, we use the Taylor expansion of the objective function. Writing $f_e = f_e^{\rm(lin)} + \Delta f_e$, we obtain 
\begin{align*}
    &f_e \arcsin\left( \frac{f_e}{K_e} \right) + \sqrt{K_e^2 - f_e^2} -K_e \\
    &= f^{\rm (lin)}_e \arcsin\left( \frac{f^{\rm (lin)}_e}{K_e} \right) 
    + \sqrt{K_e^2 - f^{\rm (lin)2}_e} -K_e \\
    &+ \arcsin \left( \frac{f^{\rm (lin)}_e}{K_e} \right) \, \Delta f_e
    + \frac{1}{\sqrt{ K_e^2 - f^{\rm (lin)2}_e }} \, \Delta f_e^2
    + \mathcal{O}(\Delta f_e^3).
\end{align*}
Now we can rewrite the optimization problem \eqref{opt:realpower2-cycle}. Setting $\Delta \vec f = \matr C \vec \ell$ and $\vec z = \vec 0$, the objective function reads
\begin{align*}
     \OO_{\vec z}(\vec \ell) &=
    \sum_{e \in \EE}  f^{\rm (lin)}_e \arcsin\left( \frac{f^{\rm (lin)}_e}{K_e} \right) 
    + \sqrt{K_e^2 - f^{\rm (lin)2}_e} \\
    & - K_e +\sum_{e \in \EE} 
    \arcsin \left( \frac{f^{\rm (lin)}_e}{K_e} \right) \, 
    \sum_{\beta} C_{e \beta} \ell_\beta  \\
    & + \sum_{e\in \EE} \frac{1}{\sqrt{ K_e^2 - f^{\rm (lin)2}_e }} \, \left( \sum_{\beta} C_{e \beta} \ell_\beta \right)^2 \nonumber 
    + \mathcal{O}(\ell^3).
\end{align*}
We ignore the inequality constraints for the time being, as in the linear power flow approximation, and discard the higher-order terms. The minimizer is then found by requiring the objective to be stationary:
\begin{align*}
    0 = \frac{\partial \OO_{\vec z} }{\partial \ell_\alpha} = & \sum_{e} \arcsin \left( \frac{f^{\rm (lin)}_e}{K_e} \right) C_{e \alpha} \\
    &+ \sum_{e, \beta} \frac{1}{\sqrt{ K_e^2 - f^{\rm (lin)2}_e }} C_{e \alpha} C_{e \beta} \ell_\beta \, .
\end{align*}
We rewrite this set of linear equations in a vectorial notation defining the diagonal matrix $\matr K^{\rm red} \in \mathbb{R}^{M \times M}$ with elements
\begin{align*}
    \matr K^{\rm  red}_{ee} =
    \left[ K_e^2 - f^{\rm (lin)2}_e \right]^{-1/2} \, .
\end{align*}
We thus obtain the equation
\begin{align*}
    \left( \matr C^\top \matr K^{\rm red} \matr C \right)
    \vec \ell
    = - \matr C^\top \arcsin \left(  \matr K^{-1} \vec f^{\rm (lin)}\right)  , 
\end{align*}
where the arcsin is taken element-wise, which is readily solved for $\vec \ell$. We thus obtain an approximate solution of the nonlinear real power flow equation,
\begin{align}
\begin{split}
\label{eq:improved_lin_approx}
\vec f^{\rm (approx.)} &= \vec f^{\rm (lin)} + \matr C  \vec \ell \\
    &= \vec f^{\rm (lin)} - \matr C 
    \left[ \matr C^\top \matr K^{\rm red} \matr C \right]^{-1}
    \matr C^\top \\
    & \qquad \qquad  \cdot \arcsin \left( \matr K^{-1} \vec f^{\rm (lin)} \right)  
    \, .
\end{split}    
\end{align}
Notably, this expression is no longer linear in the power injections $\vec p$. Nevertheless, it provides an explicit formula for the flows $\vec f^{\rm (approx.)}$ in closed form. 

To evaluate the quality of our improved approximation of the real power flow, we compare the errors of the purely linear approximation with the errors of our improved approximation for a simple 30-bus test case, see Fig.~\ref{fig:case_30_improving_lin_newton}. In this calculation, we have used the minimal cycle base of the network.
For the heavily loaded test grid, our approximation improves the median of the approximation errors by at least three orders of magnitude and the error on the heaviest loaded line by two magnitudes. For a grid with small loads, the improvement is up to nine orders of magnitude.

Since the correction term can be calculated purely algebraically and, in particular, improves the errors on the heavily loaded lines, this improved approximation can be of interest for any application where running a full load flow consumes too many resources, e.g.,~in contingency analysis.

\begin{figure}
    \centering
    \includegraphics[width=\columnwidth]{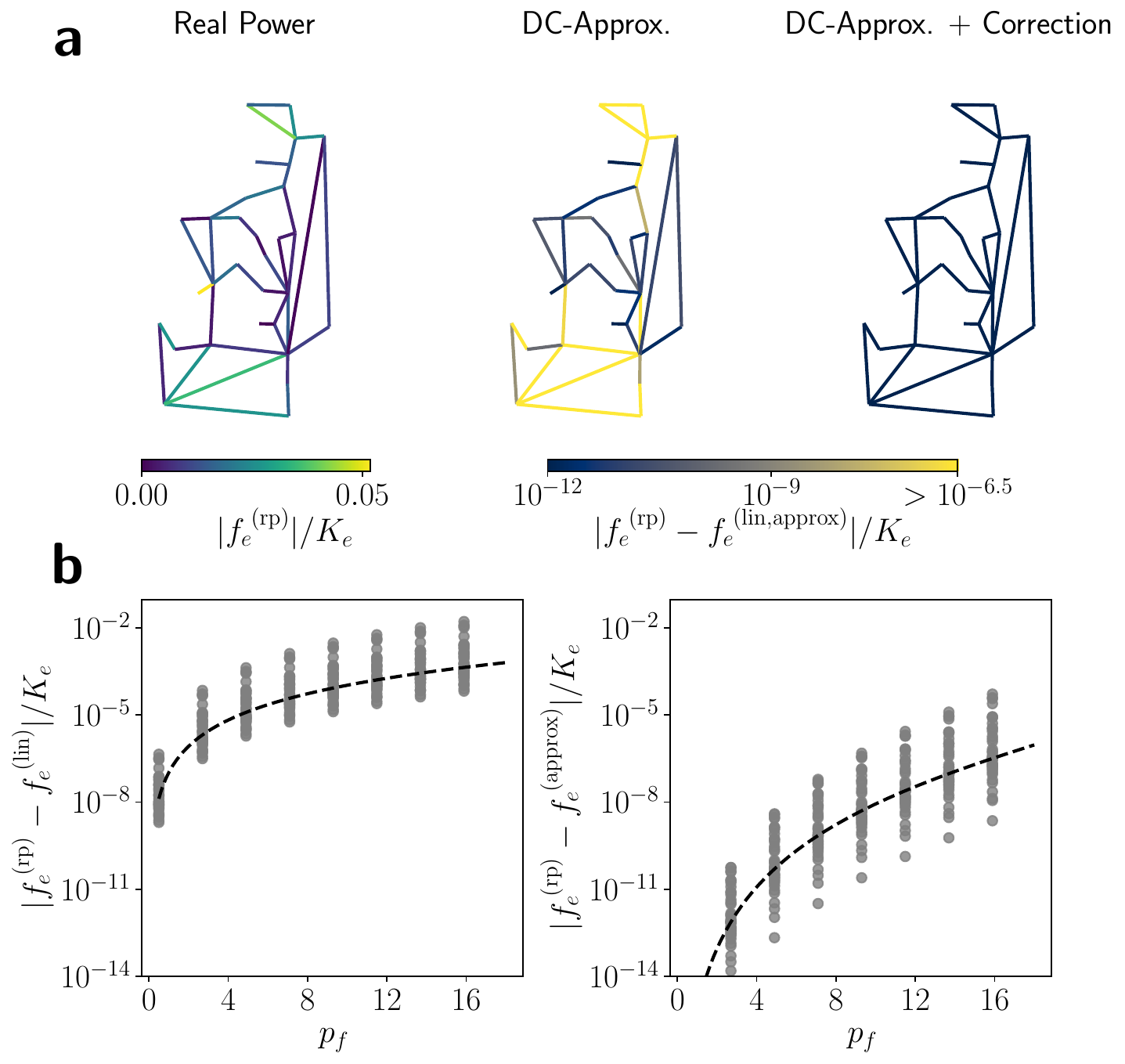}
    \caption{
    Assessment of the improved power flow 
    approximation~\eqref{eq:improved_lin_approx}.
    We compute the real power flows $\vec f$ for an adapted \textsc{Matpower} 30-Bus test case, cf.~Appendix~\ref{sec:app:test_case}, and compare numerically exact values $\vec f^{\rm (rp)}$ to the linear (DC) approximation $\vec f^{\rm (lin)}$ and the improved approximation $\vec f^{\rm (approx)}$ given by Eq.~\eqref{eq:improved_lin_approx}.
    (a) The left panel shows the loading  $|f^{\rm (rp)}_e | / K_e$ for each edge $e$ for the exact solution. 
    The other two panels depict the errors $|f^{\rm (rp)}_e - f^{\rm (lin, approx)}_e| /  K_e$ of the linear (middle) and the improved approximation (right). Errors are hardly visible for the improved 
    approximation~\eqref{eq:improved_lin_approx}, 
    even on a logarithmic scale
    (b) For a systematic assessment we increase the overall grid load by multiplying all power injections by a scaling factor $p_f$. The figure shows the error on all edges as a function of $p_f$ in a scatter plot for the linear (left) and the improved (right) approximation. The improved 
    approximation~\eqref{eq:improved_lin_approx} reduces the error on most edges by at least two orders of magnitudes for $p_f = 20$ and up to nine orders for $p_f=1$.
    For $p_f=1$, the errors of the improved approximation are below $10^{-14}$ approaching the numerical precision. 
    \label{fig:case_30_improving_lin_newton}
    }
\end{figure}

\subsection{Bounding the error in the linear power flow approximation}

We may use our insights to bound the error introduced by the linear power flow approximation. That is, we derive a bound on the norm of the difference of real and linear power solutions,
\begin{align}
    \vec \xi = \vec f^{\rm (rp)} -\vec f^{\rm (lin)}.
\end{align}

To formulate the bound, we define the function
\begin{align}
    \GC(\vec f) := \OO_{\rm rp}(\vec f) - \OO_{\rm lin}(\vec f).
\end{align}
Decomposing this function into components $\GC(\vec f) = \sum_e \GC_e(f_e)$, we have
\begin{align*}
    \GC_e(f_e) &= f_e \arcsin \left( \frac{f_e}{K_e} \right) + \sqrt{K_e^2 - f_e^2} - K_e - \frac{f_e^2}{2 K_e} ,\\
    \GC'_e(f_e) &= \arcsin \left( \frac{f_e}{K_e} \right) - \frac{f_e}{K_e} \, ,
\end{align*}
where the prime denotes the derivative with respect to the argument. One can see that $\GC_e(f_e) \ll 1$ if the loads are small $f_e \ll K_e$. Furthermore, we define the vector $\vec \zeta$ with components
\begin{equation}
    \zeta_e = K_e \GC'_e(f_e^{\rm (lin)}).
\end{equation}
We then obtain the following bound for the deviation of real and linear power flow.

\begin{thm}
\label{thm:david1}
Let $\vec f^{\rm (rp)}$ denote the solution of the real power equations with $\vec z= \vec 0$ and $\vec f^{\rm (lin)}$ the solution of the linear power flow equations, assuming that both satisfy $|f_e|< K_e$. Then the difference
\begin{align}
    \vec \xi = \vec f^{\rm (rp)} -\vec f^{\rm (lin)}
\end{align}   
is bounded as
\begin{align}
    \| \vec \xi \|_K & \le  \| \vec \zeta \|_K =   
     \left[ \sum_e K_e \GC'_e(f_e^{\rm (lin)})^2 \right]^{1/2}, 
    \label{eq:bound-xi-1} \\
    \| \vec \xi \|_K & \le  \| \matr \Pi_{\rm cycle} \vec \zeta \|_K,
    \label{eq:bound-xi-2}
\end{align}  
where we used the inner product \eqref{eq:def-inner-K} and the associated norm
$\| \vec \xi \|_K^2 :=  \langle \vec \xi, \vec \xi \rangle_K$.
\end{thm}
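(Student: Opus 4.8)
The plan is to exploit that $\vec f^{\rm (rp)}$ and $\vec f^{\rm (lin)}$ solve the \emph{same} continuity equation $\matr E\vec f=\vec p$, so that their difference is a cycle flow, and then to combine the first-order optimality conditions of the two convex problems \eqref{opt:realpower1} and \eqref{opt:dcapprox1} with the convexity of the gap function $\GC$. Because both solutions are assumed interior ($|f_e|<K_e$), the KKT multipliers of the box constraints vanish, and the stationarity conditions reduce to $\arcsin(f_e^{\rm (rp)}/K_e)=(\matr E^\top\vec\theta)_e$ and $f_e^{\rm (lin)}/K_e=(\matr E^\top\vec\theta^{\rm (lin)})_e$, exactly as in the proofs of Theorem~\ref{thm:realpower1} and of the DC optimization. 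Thus both gradients $\nabla\OO_{\rm rp}(\vec f^{\rm (rp)})$ and $\nabla\OO_{\rm lin}(\vec f^{\rm (lin)})$ lie in $\mathrm{range}(\matr E^\top)$, which is Euclidean-orthogonal to $\ker(\matr E)$. Since $\matr E\vec\xi=\vec p-\vec p=\vec 0$, the difference $\vec\xi$ is a pure cycle flow, so $\matr{\Pi}_{\rm cycle}\vec\xi=\vec\xi$ by Lemma~\ref{lem:helmholtz}, and pairing either gradient with $\vec\xi$ yields zero.

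The core computation is to expand $\|\vec\xi\|_K^2=\sum_e\xi_e(f_e^{\rm (rp)}-f_e^{\rm (lin)})/K_e$. The $f^{\rm (lin)}$ contribution equals $\langle\nabla\OO_{\rm lin}(\vec f^{\rm (lin)}),\vec\xi\rangle$ and vanishes; subtracting the likewise vanishing quantity $\sum_e\xi_e\arcsin(f_e^{\rm (rp)}/K_e)$ then turns the remainder into $-\sum_e\xi_e\GC'_e(f_e^{\rm (rp)})$, using $\GC'_e(f)=\arcsin(f/K_e)-f/K_e$. This produces the identity $\|\vec\xi\|_K^2=-\sum_e\xi_e\GC'_e(f_e^{\rm (rp)})$. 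The point of the manipulation is that the two orthogonality relations let me insert and delete the nonlinear and linear gradients for free, collapsing the whole expression onto the gap function.

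The decisive step is to pass from the unknown argument $f^{\rm (rp)}$ to the computable $f^{\rm (lin)}$ that defines $\vec\zeta$. Forming the difference gives $\sum_e\xi_e[\GC'_e(f_e^{\rm (lin)})-\GC'_e(f_e^{\rm (rp)})]$, and since $\GC_e$ is convex ($\GC''_e=(K_e^2-f_e^2)^{-1/2}-K_e^{-1}\ge0$) its derivative is nondecreasing; because $f_e^{\rm (lin)}-f_e^{\rm (rp)}=-\xi_e$, each summand $\xi_e[\GC'_e(f_e^{\rm (lin)})-\GC'_e(f_e^{\rm (rp)})]\le0$. Hence $\|\vec\xi\|_K^2\le-\sum_e\xi_e\GC'_e(f_e^{\rm (lin)})=-\langle\vec\xi,\vec\zeta\rangle_K$. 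I expect this monotonicity estimate to be the main obstacle: it is where one must pin down the direction of the inequality, and it is the essential reason the bound can be expressed through $\vec f^{\rm (lin)}$ alone. An equivalent ``sandwich'' route would note that $\OO_{\rm rp}$ is $1$-strongly convex in $\|\cdot\|_K$ while $\OO_{\rm lin}=\tfrac12\|\cdot\|_K^2$ exactly, bounding $\GC(\vec f^{\rm (lin)})-\GC(\vec f^{\rm (rp)})$ from both sides.

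The two stated bounds then follow from Cauchy--Schwarz in $\langle\cdot,\cdot\rangle_K$. Directly, $\|\vec\xi\|_K^2\le|\langle\vec\xi,\vec\zeta\rangle_K|\le\|\vec\xi\|_K\,\|\vec\zeta\|_K$ gives \eqref{eq:bound-xi-1}. For the sharper \eqref{eq:bound-xi-2} I would use $\matr{\Pi}_{\rm cycle}\vec\xi=\vec\xi$ together with the $K$-self-adjointness of $\matr{\Pi}_{\rm cycle}$ established in Lemma~\ref{lem:helmholtz} to rewrite $\langle\vec\xi,\vec\zeta\rangle_K=\langle\matr{\Pi}_{\rm cycle}\vec\xi,\vec\zeta\rangle_K=\langle\vec\xi,\matr{\Pi}_{\rm cycle}\vec\zeta\rangle_K$, and apply Cauchy--Schwarz once more to obtain $\|\vec\xi\|_K\le\|\matr{\Pi}_{\rm cycle}\vec\zeta\|_K$, which is tighter since the projection can only shrink the $K$-norm.
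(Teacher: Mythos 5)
Your argument is correct, and it reaches the key inequality $\|\vec\xi\|_K^2 + \langle\vec\zeta,\vec\xi\rangle_K \le 0$ by a genuinely different route from the paper. The paper works at the level of objective \emph{values}: it writes $\OO_{\rm rp}=\OO_{\rm lin}+\GC$, uses $\langle\vec\xi,\vec f^{\rm(lin)}\rangle_K=0$ to expand the quadratic part, bounds $\GC$ from below by its tangent at $\vec f^{\rm(lin)}$, and then combines the minimality of $\vec f^{\rm(rp)}$ with a Taylor/Hessian estimate (Appendix \ref{app:taylor}) giving $\OO_{\rm rp}(\vec f^{\rm(lin)})\ge\OO_{\rm rp}(\vec f^{\rm(rp)})+\tfrac12\|\vec\xi\|_K^2$. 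You instead work at the level of \emph{gradients}: the interior stationarity conditions place both $\nabla\OO_{\rm rp}(\vec f^{\rm(rp)})$ and $\nabla\OO_{\rm lin}(\vec f^{\rm(lin)})$ in $\operatorname{range}(\matr E^\top)$, orthogonal to the cycle flow $\vec\xi$, which yields the exact identity $\|\vec\xi\|_K^2=-\sum_e\xi_e\,\GC'_e(f_e^{\rm(rp)})$, and a single monotonicity estimate on $\GC'_e$ (equivalent to convexity of $\GC_e$) then replaces the argument $f^{\rm(rp)}$ by $f^{\rm(lin)}$. Your version is shorter and renders Appendix \ref{app:taylor} unnecessary for this theorem; the paper's value-based route has the side benefit that its intermediate inequality $\GC(\vec f^{\rm(rp)})\le\GC(\vec f^{\rm(lin)})-\tfrac12\|\vec\xi\|_K^2$ is reused in the proof of Lemma \ref{lew:heavyloadedlines}. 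The concluding steps (Cauchy--Schwarz and the self-adjointness of $\matr\Pi_{\rm cycle}$ with respect to $\langle\cdot,\cdot\rangle_K$) coincide with the paper's.
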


\begin{proof}
We have
\begin{align*}
    \OO_{\rm rp}(\vec f^{\rm (rp)}) = 
    \OO_{\rm lin}(\vec f^{\rm (lin)} + \vec \xi) + \GC(\vec f^{\rm (lin)} + \vec \xi)
\end{align*}
We can rewrite the first term as
\begin{align*}
    \OO_{\rm lin}(\vec f^{\rm (lin)} + \vec \xi) 
    &= \frac{1}{2} \| \vec f^{\rm (lin)} + \vec \xi \|_K^2 \\
    &= \OO_{\rm lin}(\vec f^{\rm (lin)})
        + \frac{1}{2} \| \vec \xi \|_K^2
\end{align*}
since $\langle \vec \xi, \vec f^{\rm (lin)} \rangle_K = 0$ for all $\vec \xi$ in the kernel of $\matr E$. 
Using the first-order Taylor expansion of the function $\GC_e$ and the fact that the function is convex, we can bound the second term from below,
\begin{align*}
     \GC(\vec f^{\rm (lin)} + \vec \xi)
     \ge \sum_e \GC_e(f_e^{\rm (lin)}) + \GC'_e(f_e^{\rm (lin)})  \, \xi_e.
\end{align*}
We thus obtain
\begin{align}
    \OO_{\rm rp}(\vec f^{\rm (rp)}) \ge
    \OO_{\rm rp}(\vec f^{\rm (lin)}) 
    + \frac{1}{2} \| \vec \xi \|_K^2
    +  \langle \vec \zeta , \vec \xi \rangle_K \, .
    \label{eq:proofd-inequality1}
\end{align}
On the other hand, the vector $\vec f^{\rm (rp)}$ is the minimizer of the function $\OO_{\rm rp}$ by definition such that $ \OO_{\rm rp}(\vec f^{\rm (lin)}) \ge \OO_{\rm rp}(\vec f^{\rm (rp)})$. Applying Taylor's theorem as detailed in appendix \ref{app:taylor}, we can obtain the improved bound
\begin{align}
    \OO_{\rm rp}(\vec f^{\rm (lin)})
    \ge
    \OO_{\rm rp}(\vec f^{\rm (rp)})
    + \frac{1}{2} \| \vec \xi \|_K^2 
    \label{eq:Frpfrp-smaller-Frpflin}
\end{align}
Combining with the inequality \eqref{eq:proofd-inequality1}, we conclude that 
\begin{align}\label{eqn:can be sharpened}
     \| \vec \xi \|_K^2
    +  \langle \vec \zeta , \vec \xi \rangle_K \le 0.
\end{align}
Applying the Cauchy-Schwarz inequality then yields Eq.~\eqref{eq:bound-xi-1}.

To obtain the second bound we note that $\vec \xi$ is a pure cycle flow because both $\vec f^{\rm (lin)}$ and $\vec f^{\rm (rp)}$ satisfy the KCL. Hence we find that $\vec \xi = \matr \Pi_{\rm cycle} \vec \xi$. Using the orthogonality of the projection with respect to the inner product \eqref{eq:def-inner-K}, we can rewrite the inequality \eqref{eqn:can be sharpened} and obtain
\begin{align*}
    0 \; \ge \;  & \frac{1}{2} \| \vec \xi \|_K^2
    +  \langle \vec \zeta , \vec \xi \rangle_K \\
    & = \frac{1}{2} \| \vec \xi \|_K^2
    +  \langle \vec \zeta , \matr \Pi_{\rm cycle} \vec \xi \rangle_K \\
    & = \frac{1}{2} \| \vec \xi \|_K^2
    +  \langle \matr \Pi_{\rm cycle}  \vec \zeta , \vec \xi \rangle_K \, .
\end{align*}
Applying the Cauchy-Schwarz inequality and squaring the result then yields Eq.~\eqref{eq:bound-xi-2}.  
\end{proof}

\begin{figure}[tb]
\includegraphics[width=\columnwidth]{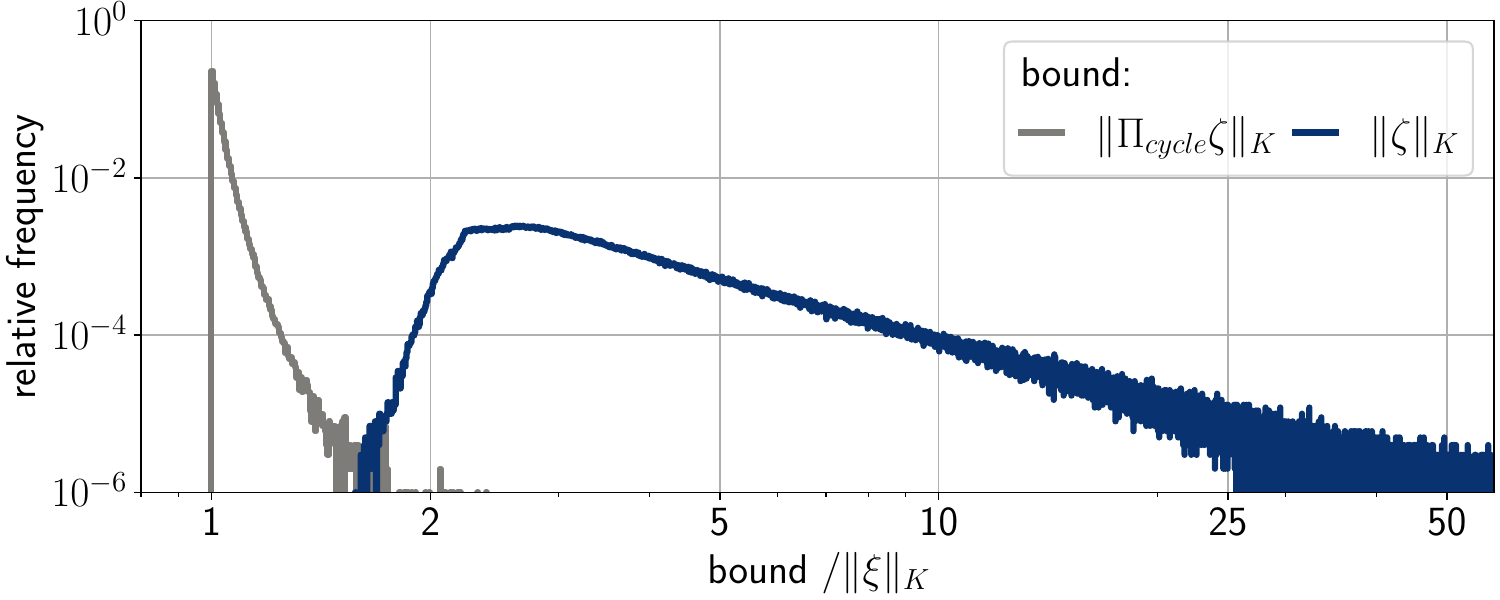}
\caption{
\label{fig:bounds}
Assessment of bounds of theorem~\ref{thm:david1} for the error of the linear power flow approximation.
We solve the real power flow and the linear power flow equations for the adapted \textsc{Matpower} 30-bus test case, cf.~appendix \ref{sec:app:test_case}, and $10^6$ randomly sampled and valid power injection vectors $\vec p$.
In each case we compute the norm of error $\| \vec \xi \|_K $  and compare it to the upper bounds given by 
$ \|  \vec \zeta \|_K$ and $ \| \matr \Pi_{\rm cycle} \vec \zeta \|_K$, respectively.
The figure shows a histogram of the ratio $\rm {bound}/{ \| \vec \xi \|_K}$ which serves as a measure for the tightness of the bound. 
The improved upper bound $\| \matr \Pi_{\rm cycle} \vec \zeta \|_K$  is significantly better and appears to be tight.
}
\end{figure}

We numerically test the tightness of the derived bounds. To this end we start from the adapted \textsc{Matpower} 30-bus test case, see Appendix \ref{sec:app:test_case}, choose the power injections $p_n$ at random and numerically compute both the real power flow $\vec f^{\rm (rp)}$ and the linear power flow $\vec f^{\rm (lin)}$. Figure~\ref{fig:bounds} shows that the bound \eqref{eq:bound-xi-2} incorporating the projection is much tighter than the simpler bound \eqref{eq:bound-xi-1}.

Theorem~\ref{thm:david1} provides an upper bound for the $K$-norm of the error $\vec \xi$. We further derive an error bound for each single line, combining the previous result with a general property of cycle flows.

\begin{lem}
\label{lem:cycle_flows_bound_k_norm}
For every cycle flow $\vec f^{(c)}$, i.e., every flow that satisfies $\matr E \vec f^{(c)} = 0$, we have the inequality 
\begin{equation*}
    \| \vec f^{(c)} \|_K^2 \geq \big( f^{(c)}_a \big)^2
    \frac{1}{K_a (1 - K_a \Omega_a)}
\end{equation*}
for every edge $a = (n,m) \in \EE$, where $\Omega_a$ is the effective resistance as defined in Eq.~\eqref{eq:Omega-a}.
\end{lem}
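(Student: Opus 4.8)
The plan is to isolate the single edge component $f^{(c)}_a$ out of the full $K$-norm by a Cauchy--Schwarz argument in the inner product $\langle\cdot,\cdot\rangle_K$, using the fact that a cycle flow is fixed by the projector $\matr\Pi_{\rm cycle}$. First I would observe that, since $\vec f^{(c)}\in\ker(\matr E)$, Lemma~\ref{lem:helmholtz} gives $\matr\Pi_{\rm cycle}\vec f^{(c)}=\vec f^{(c)}$. Next I would write the $a$-th component as a $K$-inner product against the $a$-th standard basis vector $\vec w_a$ in edge space: since $\langle\vec w_a,\vec g\rangle_K=K_a^{-1}g_a$ for any $\vec g$, we have $f^{(c)}_a=K_a\,\langle\vec w_a,\vec f^{(c)}\rangle_K$.

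Now I would exploit that $\matr\Pi_{\rm cycle}$ is idempotent and self-adjoint with respect to $\langle\cdot,\cdot\rangle_K$, both established in Lemma~\ref{lem:helmholtz}. Replacing $\vec f^{(c)}$ by $\matr\Pi_{\rm cycle}\vec f^{(c)}$ and moving the projector onto the other factor yields $\langle\vec w_a,\vec f^{(c)}\rangle_K=\langle\matr\Pi_{\rm cycle}\vec w_a,\vec f^{(c)}\rangle_K$. Applying the Cauchy--Schwarz inequality for $\langle\cdot,\cdot\rangle_K$ then gives
\begin{equation*}
    |f^{(c)}_a| \le K_a\,\|\matr\Pi_{\rm cycle}\vec w_a\|_K\,\|\vec f^{(c)}\|_K,
\end{equation*}
which, after squaring and rearranging, already has the desired shape $\|\vec f^{(c)}\|_K^2 \ge (f^{(c)}_a)^2 / (K_a^2\,\|\matr\Pi_{\rm cycle}\vec w_a\|_K^2)$.

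The one genuine computation — and the place where the effective resistance enters — is the evaluation of $\|\matr\Pi_{\rm cycle}\vec w_a\|_K^2$. Using self-adjointness and $\matr\Pi_{\rm cycle}^2=\matr\Pi_{\rm cycle}$, this norm collapses to the diagonal entry $\langle\vec w_a,\matr\Pi_{\rm cycle}\vec w_a\rangle_K=K_a^{-1}(\matr\Pi_{\rm cycle})_{a,a}$. By the identity $(\matr\Pi_{\rm cycle})_{a,a}=1-K_a\Omega_a$ obtained in the discussion of the resistance distance following Eq.~\eqref{eq:Omega-a}, I get $\|\matr\Pi_{\rm cycle}\vec w_a\|_K^2=K_a^{-1}(1-K_a\Omega_a)$. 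Substituting this into the squared Cauchy--Schwarz bound yields exactly
\begin{equation*}
    \|\vec f^{(c)}\|_K^2 \ge \frac{(f^{(c)}_a)^2}{K_a^2\,\|\matr\Pi_{\rm cycle}\vec w_a\|_K^2} = \frac{(f^{(c)}_a)^2}{K_a(1-K_a\Omega_a)}.
\end{equation*}
I do not expect a serious obstacle: the argument is essentially a one-line Cauchy--Schwarz estimate. The only care needed is to keep the $K_a$ factors consistent between the component extraction $f^{(c)}_a=K_a\langle\vec w_a,\vec f^{(c)}\rangle_K$ and the norm $\|\matr\Pi_{\rm cycle}\vec w_a\|_K^2=K_a^{-1}(\matr\Pi_{\rm cycle})_{a,a}$, and to note the positivity $(\matr\Pi_{\rm cycle})_{a,a}\ge 0$ (equivalently $K_a\Omega_a\le 1$), which guarantees the right-hand side is well defined whenever $a$ lies on a cycle.
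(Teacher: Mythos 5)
Your proof is correct, and it takes a genuinely different route from the paper's. The paper proves the lemma in Appendix~\ref{app:proof-norm-max-cycle} by solving the constrained minimization problem $\min\{\|\vec g\|_K^2 : \matr E\vec g=0,\ g_a=f_a^{(c)}\}$ with Lagrange multipliers: it computes the minimizer explicitly as a potential flow plus a correction on edge $a$, evaluates its $K$-norm, and reads off the constant $K_a(1-K_a\Omega_a)$ from the resulting algebra. You instead write $f_a^{(c)}=K_a\langle\vec w_a,\vec f^{(c)}\rangle_K$, use that $\matr\Pi_{\rm cycle}$ fixes cycle flows and is a $K$-self-adjoint idempotent (both already established in Lemma~\ref{lem:helmholtz}), apply Cauchy--Schwarz, and identify $\|\matr\Pi_{\rm cycle}\vec w_a\|_K^2=K_a^{-1}(\matr\Pi_{\rm cycle})_{a,a}=K_a^{-1}(1-K_a\Omega_a)$ from the resistance-distance identity following Eq.~\eqref{eq:Omega-a}. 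The two arguments are dual: the equality case of your Cauchy--Schwarz step is attained precisely at the paper's explicit minimizer, so the bound is sharp in both versions. Your route is shorter and reuses the projector machinery the paper has already built, whereas the paper's computation is self-contained and produces the extremal flow explicitly (which is what makes the tightness of the bound visible). One small point in your favor: in the degenerate case $K_a\Omega_a=1$ (edge $a$ a bridge), your argument shows directly that $\matr\Pi_{\rm cycle}\vec w_a=0$ forces $f_a^{(c)}=0$, so nothing is lost; the paper's division by $1-K_a\Omega_a$ silently assumes $a$ lies on a cycle.
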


The proof of this lemma is mostly technical such that we postpone it to appendix~\ref{app:proof-norm-max-cycle}. We can apply this inequality with theorem \ref{thm:david1}, taking into account that $\matr E (\vec f^{\rm (rp)}  - \vec f^{\rm (lin)} ) = \vec 0$. One directly obtains the following error bound.

\begin{cor}
\label{lcor:bounds_single_lines}
For every edge $a = (n,m) \in \EE$ we find that 
\begin{equation}
    (f^{\rm (rp)}_a - f^{\rm (lin)}_a)^2 \leq  K_a (1 - K_a \Omega_a) \| \matr \Pi_{\rm cycle} \vec \zeta \|_K^2 \, .
\end{equation}
\end{cor}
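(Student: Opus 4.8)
The plan is to derive the single-line bound by composing two results already in hand: the $K$-norm error bound of Theorem~\ref{thm:david1} and the reverse-type estimate for cycle flows of Lemma~\ref{lem:cycle_flows_bound_k_norm}. The bridge between them is the observation that the error vector $\vec \xi = \vec f^{\rm (rp)} - \vec f^{\rm (lin)}$ is a pure cycle flow, which makes Lemma~\ref{lem:cycle_flows_bound_k_norm} applicable with $\vec f^{(c)} = \vec \xi$.

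First I would establish that $\vec \xi \in \ker(\matr E)$: both $\vec f^{\rm (rp)}$ and $\vec f^{\rm (lin)}$ obey the KCL~\eqref{eq:KCL} for the same injections $\vec p$, so $\matr E \vec f^{\rm (rp)} = \matr E \vec f^{\rm (lin)} = \vec p$ and hence $\matr E \vec \xi = \vec 0$. Fixing an arbitrary edge $a = (n,m)$, Lemma~\ref{lem:cycle_flows_bound_k_norm} then reads $\| \vec \xi \|_K^2 \ge \xi_a^2 / [K_a(1 - K_a \Omega_a)]$. To turn this into a component bound I would multiply through by the prefactor $K_a(1 - K_a \Omega_a) = K_a (\matr \Pi_{\rm cycle})_{a,a} \ge 0$, which is nonnegative by the resistance-distance identity below Eq.~\eqref{eq:Omega-a} and which avoids any division (and, on a bridge where the factor vanishes, correctly forces $\xi_a = 0$). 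This yields $\xi_a^2 \le K_a(1 - K_a \Omega_a)\,\| \vec \xi \|_K^2$, and inserting the sharper estimate $\| \vec \xi \|_K \le \| \matr \Pi_{\rm cycle} \vec \zeta \|_K$ from Eq.~\eqref{eq:bound-xi-2}, together with $\xi_a = f^{\rm (rp)}_a - f^{\rm (lin)}_a$, gives exactly the claimed inequality.

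I do not anticipate a substantive obstacle at the level of the corollary: once the cycle-flow property of $\vec \xi$ is noted, it is a two-line composition. The only points requiring care are the \emph{direction} of Lemma~\ref{lem:cycle_flows_bound_k_norm} --- it lower-bounds the norm, so it must be inverted to upper-bound a single component --- and the nonnegativity of the prefactor $K_a(1 - K_a \Omega_a)$. The genuine technical work sits in Lemma~\ref{lem:cycle_flows_bound_k_norm} itself, whose proof is deferred to the appendix.
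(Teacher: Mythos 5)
Your proposal is correct and follows exactly the paper's route: note that $\vec \xi = \vec f^{\rm (rp)} - \vec f^{\rm (lin)}$ lies in $\ker(\matr E)$, apply Lemma~\ref{lem:cycle_flows_bound_k_norm} to it, and insert the bound~\eqref{eq:bound-xi-2} from Theorem~\ref{thm:david1}. Your extra remark on the nonnegativity of $K_a(1 - K_a \Omega_a)$ (so that multiplying through is legitimate and the bridge case $\xi_a = 0$ is handled) is a worthwhile point of care that the paper leaves implicit.
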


\section{Geometry of real and linear power flow}
\label{sec:geometry}

We have shown that both the real and the linear power flow equations can be recast as optimization problems. Comparing the two objective functions $\OO_{\rm rp}(\vec f)$ and $\OO_{\rm lin}(\vec f)$ thus provides insights into the relations of the two problems and the limitations of the linear approximation. Here we put forward a geometric approach to this topic.

\subsection{Geometric interpretation of the cycle condition}

We have discussed in Sec.~\ref{sec:notation-review} that the solutions for the real power flow equations are characterized by two conditions: (i) The continuity equation or KCL \eqref{eq:KCL} and (ii) the cycle condition
(cf.~Eq.~\eqref{eq:cycle-condition})
\begin{equation}
    \sum_{e=1}^{\nl} C_{e,\alpha} \arcsin\left(  \frac{f_e}{K_e} \right) = 2 \pi z_\alpha \, .
    \label{eq:cycle-condition2}
\end{equation}
For the linear flow equations, the analog to the cycle condition is just given by Kirchhoff's voltage law (KVL)
\begin{equation}
    \sum_{e=1}^{\nl} C_{e,\alpha} \left(  \frac{f_e}{K_e} \right) = 0.
    \label{eq:KVL2}
\end{equation}

\begin{figure}[tb]
\centering
\includegraphics[width=\columnwidth]{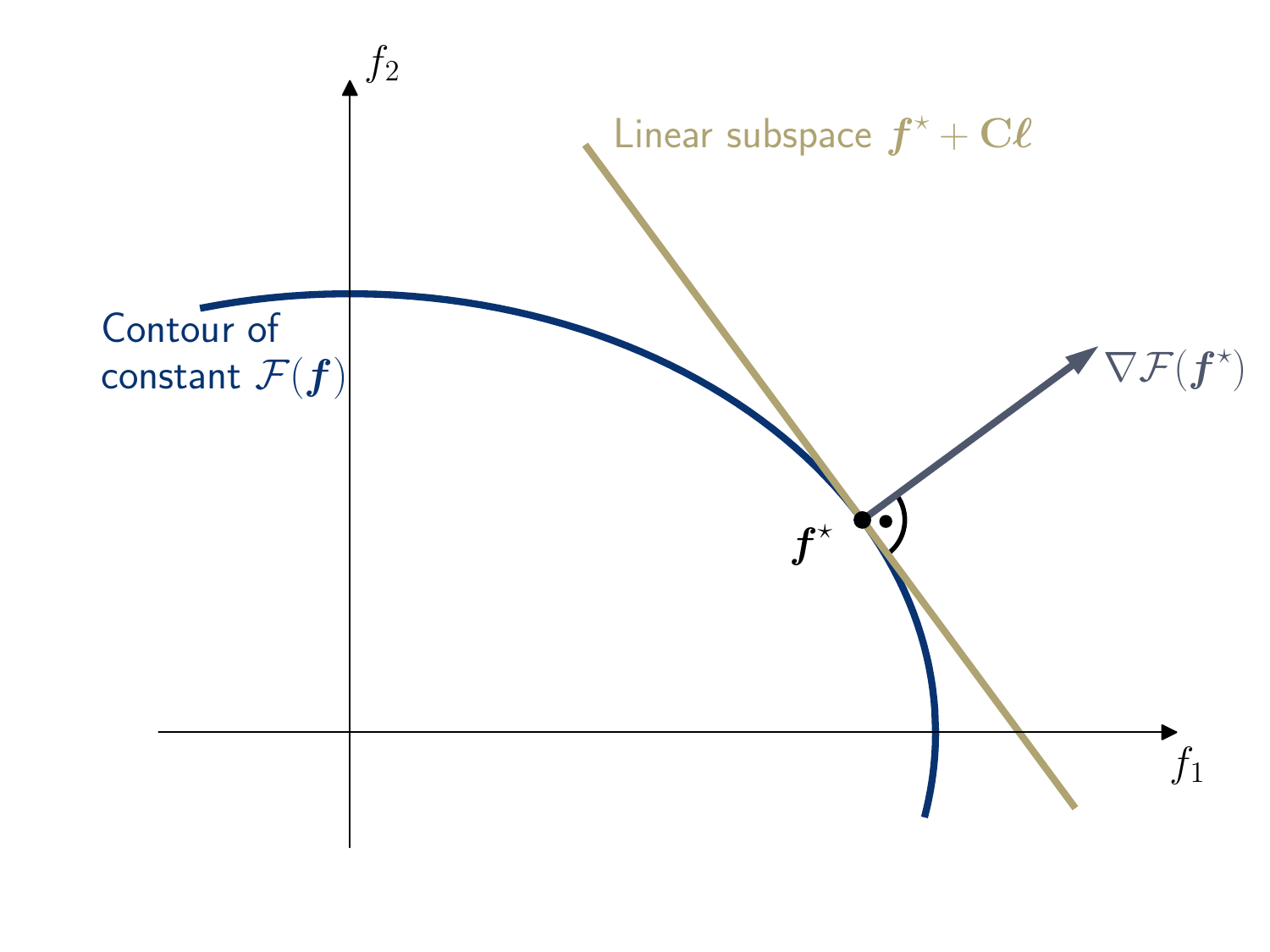}
\caption{Geometric interpretation of the cycle conditions. The optimal solution $\vec f^{\star}$ of the optimization problems~\eqref{opt:realpower1} and~\eqref{opt:dcapprox1} 
are given by the point where contour lines of the respective objective function $\OO (\vec f)$ and the linear subspace spanned by the equality constraints $\matr E \vec f = \vec p$ are tangential.
Equivalently, $\vec f^\star$ is the point where the gradient $\nabla \OO (\vec f)$ is orthogonal to the linear subspace. Since any flow $\vec f$ can be decomposed into directed and a cycle flow $\vec f^{(c)} = \matr C \vec \ell \in \ker (\matr E)$, the linear subspace is spanned by all points of the form $\vec f^{\star} + \matr C \vec \ell$. 
\label{fig:orthogonality}
}
\end{figure}

\begin{figure*}[tb]
\label{fig:grad_descent}
\includegraphics[width=\textwidth]{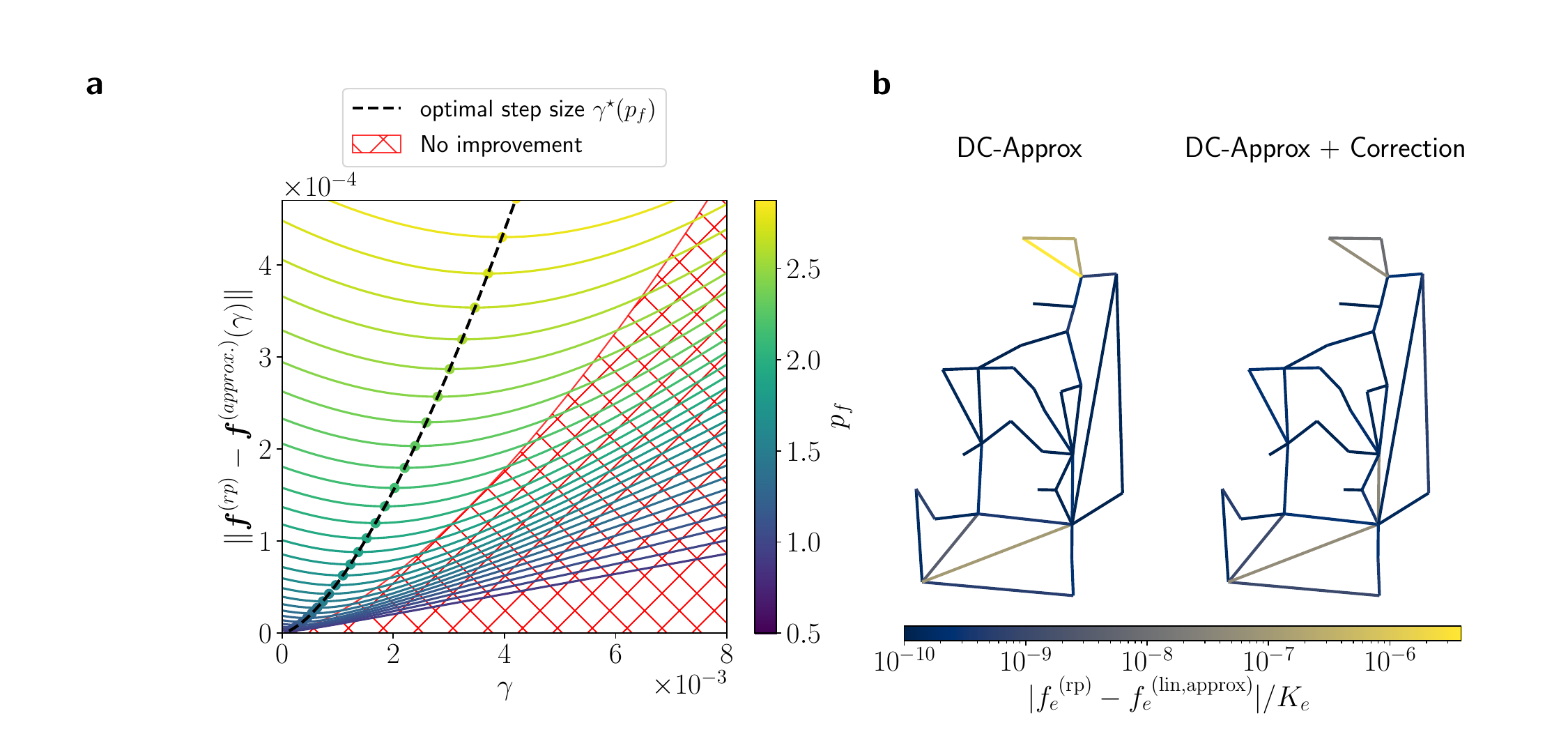}
\caption{Improving the linear power flow approximation using a projected gradient descent.
We compute the real power flows $\vec f$ for an adapted \textsc{Matpower} 30-Bus test case, cf.~appendix \ref{sec:app:test_case}, and compare numerically exact values $\vec f^{\rm (rp)}$ to the improved approximation
given by Eq.~\eqref{eq:gradient-step}.
(a) We show the error of the approximation as a function of the step size $\gamma$ for different values of the scaling factor $p_f$ that controls the grid load.
The optimal step size $\gamma^\star$, drawn as a black dotted line, increases with $p_f$. The red hatched area shows where the gradient descent step does not improve the approximation.
(b) The maps depict the error of the approximation $|f^{\rm (rp)}_e - f^{\rm (lin, approx)}_e| /  K_e$ for each individual line $e$, comparing the linear  approximation and the improved approximation given by Eq.~\eqref{eq:gradient-step} for the optimal step size $\gamma^\star$ and $p_f  = 1.028$.
}
\end{figure*}

We now discuss how the cycle condition and the KVL relate to the optimization problems defined above, restricting ourselves to the case $\vec z = \vec 0$ for the sake of simplicity.
Assuming that the inequality constraints are non-binding, the solution to a convex optimization problem is found where the gradient of the objective function $\nabla_{\vec f} \OO$ is orthogonal to the linear subspace defined by the equality constraints. This relation is sketched in Fig.~\ref{fig:orthogonality}. In our case, the linear subspace is given by the solutions of the KCL and thus is given by all points 
\begin{align*}
    \vec f = \vec f^{(0)} + \matr C \vec \ell \, 
\end{align*}
where $\matr C \vec \ell $ is an arbitrary cycle flow. \\
We now choose a standard unit basis with basis vectors $\vec u_\alpha, \, \alpha = 1,\ldots,M-N+1$ for the cycle space. The orthogonality condition of the gradient and the linear subspace is then written as
\begin{align*}
    \nabla_{\vec f} \OO(\vec f^\star) 
    \, \cdot \, 
    \matr C \vec u_\alpha = 0,
    \qquad \forall \, \alpha = 1,\ldots,M-N+1.
\end{align*}
Now we can insert the objective function $\OO_{\rm rp}$ or $\OO_{\rm lin}$, compute the gradient and evaluate the condition. The resulting conditions are nothing but the cycle condition
\eqref{eq:cycle-condition2} and the KVL \eqref{eq:KVL2}, respectively. Hence, these graph theoretic conditions have an intuitive geometric interpretation in the context of our optimization problem.

\subsection{Gradient descent}

The geometric interpretation enables another extension of the linear power flow approximation. The linear flows $\vec f^{\rm (lin)}$ provide an easily computable approximation to the real power flows $\vec f^{\rm (rp)}$. 
In Sec.~\ref{sec:improving-dc-newton} we have shown that this approximation can be improved by minimizing $\OO_{\rm rp}$ in the spirit of Newton's method. 
Instead, we may also minimize $\OO_{\rm rp}$ using a gradient descent approach. However, we must keep in mind the geometric aspects of the problem: Any optimization must take place on the affine linear subspace defined by the KCL $\matr E \vec f = \vec p$. Hence, we cannot simply use the  gradient of the objective function $\vec{\nabla} \OO_{\rm rp}$, but rather need its projection onto the linear subspace. Luckily, lemma \ref{lem:helmholtz} shows how to implement this projection. The gradient descent step is thus given by 
\begin{align}
    \vec f' = \vec f - \gamma \, \matr \Pi_{\rm cycle} \, \vec{\nabla}\OO_{\rm rp}(\vec f),
    \label{eq:gradient-step}
\end{align}
where $\gamma$ is the step size. Notably, the gradient is given by
\begin{align}
    \vec{\nabla}\OO_{\rm rp}(\vec f) = \arcsin(\matr K^{-1} \vec f),
\end{align}
where the $\arcsin$ is taken element-wise.

We now provide a numerical example and evaluate the gradient descent step \eqref{eq:gradient-step} for an adapted \textsc{Matpower} 30-bus test case, see Appendix \ref{sec:app:test_case}. The optimal step size $\gamma^\star$ for only one step can be found numerically, cf.~Fig.~\ref{fig:grad_descent}. With one optimal gradient descent step the error between the approximated line flows and the real power flows $\vec f^{\rm(rp)}$ is reduced as expected.  In particular, the error on the heaviest loaded line is reduced. However, the optimal step size depends on the power injections and the topology of the grid, and is not known a priori. 
One may iterate the gradient step, where appropriate values of $\gamma$ may be determined by line search or more advanced methods~\cite{sun2006optimization}.
As an alternative, one may resort to Newton's methods introduced in Sec.~\ref{sec:improving-dc-newton} which provides good results in a single step. 

At this point, we emphasize another application beyond numerical optimization. Given the linear power flow approximation, we can heuristically predict \emph{how} this approximation misses the nonlinear solution. In almost all cases, the first gradient descent step \eqref{eq:gradient-step} will point in the correct direction. That is
$\left[ \matr \Pi_{\rm cycle} \, \vec{\nabla}\OO_{\rm rp}(\vec f^{\rm (lin)}) \right]_e > 0$
typically implies that
$f^{\rm (rp)}_e < f^{\rm (lin)}_e$.
These heuristics may be used to assess the robustness of the linear power flow: Assume that a line $e$ is almost fully loaded in the linear power flow approximation, $f_e^{\rm (lin)} \approx K_e$. The heuristics then show whether the flow $f_e^{\rm (rp)} $ is higher or lower and thus indicates a potential overload.

\subsection{Properties of real and linear power flows: Examples}

\begin{figure}[tb]
\centering
\includegraphics[width=\columnwidth]{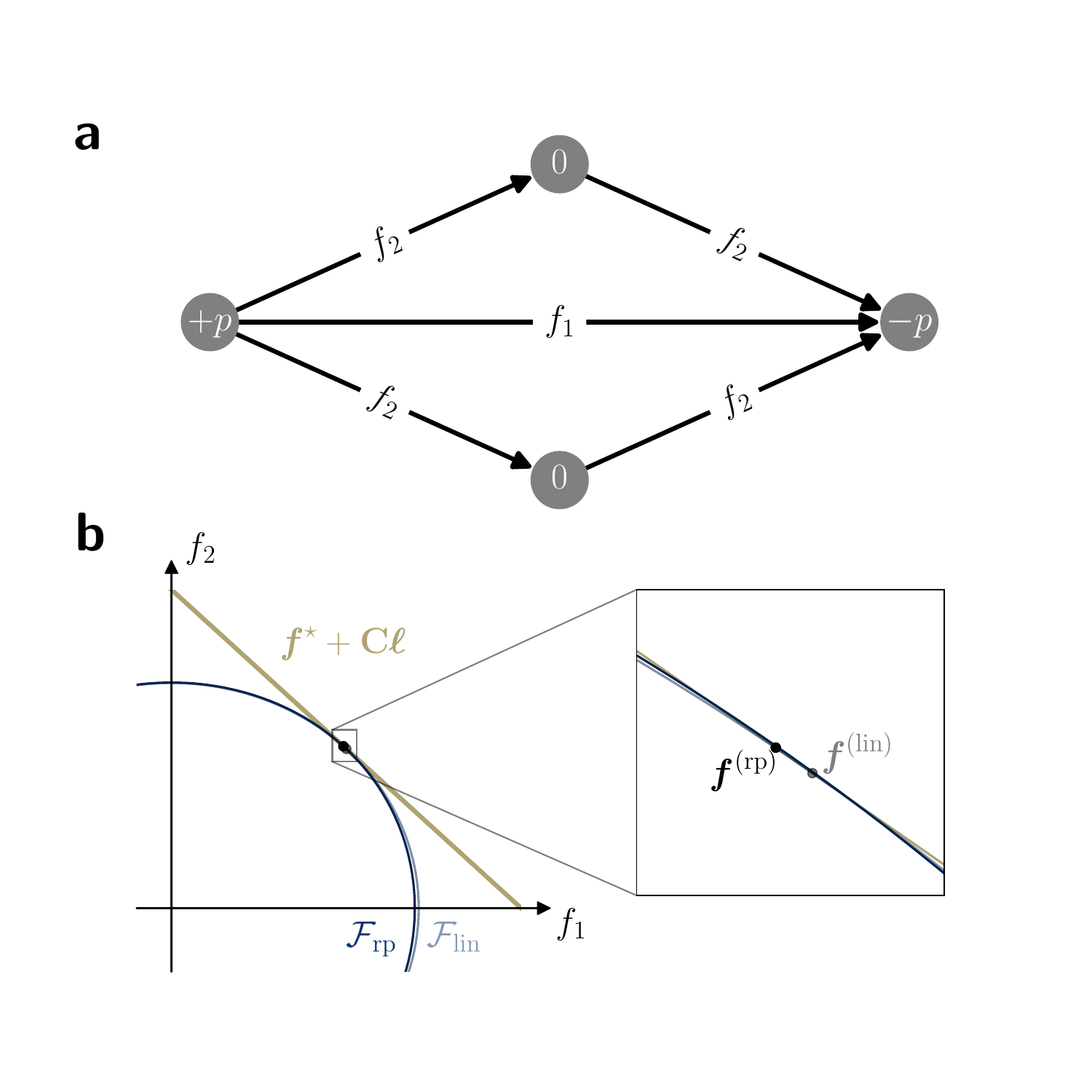}
\caption{
Geometry of the optimization problems associated with the real power flow and the linear power flow approximation. 
(a) We consider an elementary 4-node network with a high degree of symmetry and only two degrees of freedom $f_1$ and $f_2$. 
(b) The contour lines of the objective function $\OO _{lin}(\vec f)$ are ellipses. In comparison, the contour lines of $\OO _{rp}(\vec f)$ are slightly contracted along the axes.
The respective optimizers $\vec f^{(rp)}$ and $\vec f^{(lin)}$ of the constrained optimization problems (dots) are the points where the contour lines are tangent to the affine subspace defined by $\matr E \vec f = \vec p$, cf.~Fig.~\ref{fig:orthogonality}.
Due to the contraction along the axes, the optimizer  $\vec f^{(rp)}$ is further from the coordinate axis. Physically, this corresponds to a more balanced flow and, thus, a smaller maximum line loading.
\label{fig:rp-dc-geometry}
}
\end{figure}

The geometric interpretation of the optimization problem allows further insights into the properties of the power flows, in particular the relation of the real power equations and the linear power flow approximation. We first consider a simple example sketched in Fig.~\ref{fig:rp-dc-geometry}\textbf{a}. For simplicity, we assume that all transmission lines have $K_e=1$. Exploiting the symmetry of the problem, we have only two independent variables $f_1$ and $f_2$ that have to satisfy the constraint
\begin{align}
    f_1 + 2 f_2 = \bar p.
    \label{eq:example-con}
\end{align}
The physical flows are found by minimizing the objective
\begin{align}
\begin{split}
    \label{eq:example-con_rp}
    \OO_{\rm rp}(f_1,f_2)  = & f_1 \arcsin(f_1) + \sqrt{1-f_1^2} \\
    & + 4  f_2 \arcsin(f_2)  
    + 4 \sqrt{1-f_2^2} 
    - 5 .
\end{split}
\end{align}
If we would instead invoke the linear power flow equation, we would have
\begin{align}
    \label{eq:example-con_lin}
    \OO_{\rm lin}(f_1,f_2) = f_1^2 + 4 f_2^2 .
\end{align}
The two optimization problems are illustrated in Fig.~\ref{fig:rp-dc-geometry}\textbf{b}. Notably, the constraint \eqref{eq:example-con} defines an affine linear subspace which is shown as a straight line in the figure.
The minimizer is found where this affine subspace is \emph{tangent} to the surface of constant $\OO(f_1,f_2)$. Comparing real and linear power, we see that the surface of constant $\OO_{\rm rp}$ is more `angled' than the surface of constant $\OO_{\rm lin}$. Hence, the minimizer $\vec f_{\rm lin}$ is closer to one of the axes than the minimizer $\vec f_{\rm rp}$. That is, the real power flows $f_1$ and $f_2$ are more balanced than predicted by the linear power flow approximation.
Similar geometric arguments apply to many real or linear power flow problems. We conclude by comparing the real power flow to the linear power flow approximation that flows are \emph{typically} more evenly distributed and the maximum loading is lower. We note that this topic has been previously addressed in Ref.~\cite{dorfler2013synchronization}, leading to an efficient synchronization condition.

\begin{figure}[tb]
\includegraphics[width=\columnwidth]{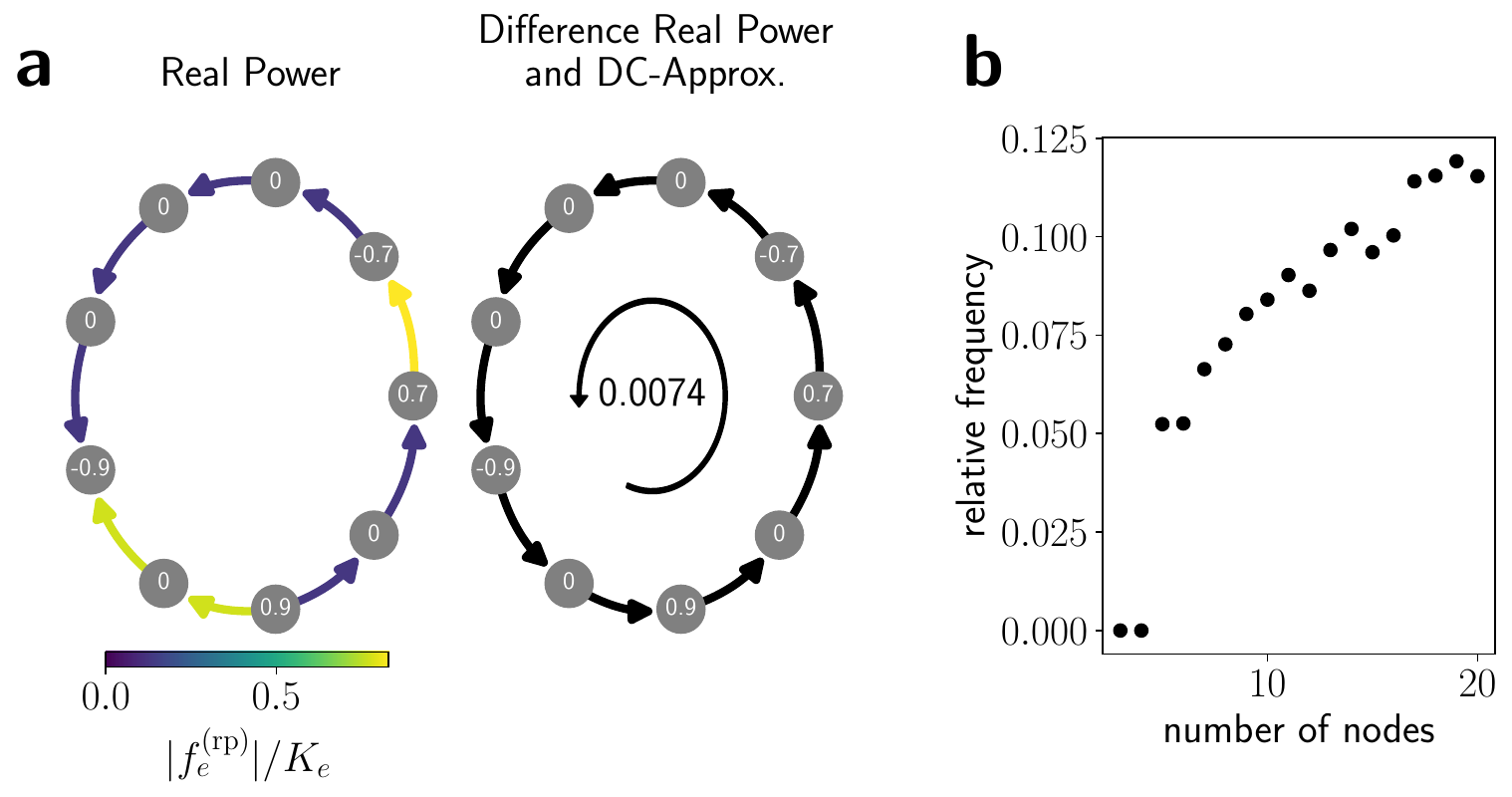}
\caption{
The maximum line loading of the real power flow and the linear power flow approximation in a cyclic network.
(a) An elementary example with $N=9$ nodes and homogeneous coupling strengths $K_e \equiv 1$ and power injections $p_n$ given in cycles.
The left map shows the loading $|f_e^{\rm (rp)}|/K_e$ and the direction of the flow for each edge $e$. The difference $f_e^{(\rm rp)} - f_e^{(\rm lin)}$ can be expressed as a cycle flow $\ell$ shown in the right map.
Contrary to the geometric intuition developed in Fig.~\ref{fig:rp-dc-geometry}, the linear approximation underestimates the maximum line loading $\max_e |f_e|/K_e$.
That is, the cycle flow $\ell$ is parallel to the flow on the most heavily loaded line.
(b) For a systematic comparison, we consider networks of different sizes $N$ and $10^6$ randomly sampled and valid power injection vectors $\vec p$. For each network we compute the real power flow $\vec f^{\rm (rp)}$
and the linear approximation $\vec f^{\rm (lin)}$.
For each $N$ we count the number of networks for which 
$\max_e |f_e^{(\rm rp)}|$ exceeds $\max_e |f_e^{(\rm lin)}|$ by at least $10^{-4} \%$.
The relative frequency of such networks increases with the size $N$, reaching approx. $12\%$ for $N = 20$. No such networks are observed for $N\le 4$. 
\label{fig:maxload-cycles}
}
\end{figure} 

However, one must  be careful in formalizing and generalizing these statements. While they are often true, they do not hold in every case. To gain further insights, we will conduct a numerical experiment comparing the maximum load $\max_e |f_e|/k_e$  in the real power flow and in the linear approximation. For this experiment, we investigate cyclic networks with $N$ nodes and all line parameters set to unity, $K_e \equiv 1 \forall e \in \EE$. A first example is shown in Fig.~\ref{fig:maxload-cycles}a. In this case, we find that the maximum load $\max_e |f_e|/k_e$ is \emph{underestimated} in the linear power flow approximation contrary. This appears surprising but does not contradict our general reasoning. 
The network includes three very highly loaded edges. The linear power flow overestimates the flow for two of them and underestimates it for only one. Hence, we find that the linear power flow overestimates high loadings \emph{on average}, while we cannot make a general statement for individual lines. We will provide a rigorous statement in lemma \ref{lew:heavyloadedlines}. \\
We continue our numerical experiment, considering different network sizes $N$ and use cases. We uniformly sample power injections $\vec p$ such that the grid is balanced $\sum_n p_n = 0$ and compute the real power flow $\vec f^{\rm (rp)}$ as well as the linear approximation $\vec f^{\rm (lin)}$. For each value of $N$ we count the the number of samples where the linear approximation underestimates the maximum loading, that is $\max_e |f_e^{\rm (rp)}|  > \max_e |f_e^{\rm (lin)}| $. Results are shown in Fig.~\ref{lew:heavyloadedlines}b.
For $N \leq 4$ we find no cases where the linear approximation underestimates the maximum load. The number of cases increases with the cycle length $N$ until it reaches about $12\%$ for a ring with $N=20$ nodes. Notably, cycles with $N \leq 4$ are special as they do not support multistability~\cite{manik2017cycle} and admit explicit stability conditions~\cite{dorfler2013synchronization}.

\subsection{Properties of real and linear power flows: Rigorous results}

We now give two rigorous result on the distribution of line loading and the maximum line loading comparing real and linear power flow. We first show that there are more heavily loaded lines in linear power flow approximation than in the nonlinear real power flow. To make this statement rigorous, we introduce a function that indicates a heavy loading. Extending theorem
\ref{thm:david1} we define the function
\begin{align}
    \label{eq:indcator_line_loading}
    \hat \GC_e(f_e) = \GC_e(f_e) / \GC_e(K_e),
\end{align}
which is plotted in Fig.~\ref{fig:fig:indicator_line_loadings}. This function increases monotonously and nonlinearly with the line load $|f_e| / K_e$. If a line is weakly loaded, $|f_e| < K_e/2$, then the function is close to zero $\hat \GC_e(f_e) < 0.04$. If the function is heavily loaded $|f_e| \approx K_e$, the function approaches unity. We can thus interpret the function $ \hat \GC_e(f_e)$ as the desired indicator for heavy loading. We can then find the following statement showing that the number of heavily loaded lines is smaller for $\vec f^{\rm (rp)}$ than for $\vec f^{\rm (lin)}$
\begin{lem}
\label{lew:heavyloadedlines}
The weighted sum of heavily loaded lines as measured by the indicator function $\hat \GC_e(f_e)$ satisfies
\begin{align}   
   \sum_e K_e \, \hat \GC_e(f^{\rm (rp)}_e)  
   \le 
   \sum_e K_e \, \hat \GC_e(f^{\rm (lin)}_e) 
   .
\end{align}
\end{lem}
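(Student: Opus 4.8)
The plan is to reduce the stated inequality to the single scalar comparison $\GC(\vec f^{\rm (rp)}) \le \GC(\vec f^{\rm (lin)})$, where $\GC(\vec f) = \sum_e \GC_e(f_e)$ as introduced just before Theorem~\ref{thm:david1}. The crucial preliminary observation is that the normalization constant $\GC_e(K_e)$ is itself proportional to $K_e$. Evaluating the component function at the boundary $f_e = K_e$ and using $\arcsin(1)=\pi/2$ gives
\begin{align*}
    \GC_e(K_e) = K_e\,\frac{\pi}{2} + 0 - K_e - \frac{K_e}{2} = \frac{\pi-3}{2}\,K_e,
\end{align*}
so that the per-edge weight collapses to a constant independent of $e$,
\begin{align*}
    K_e\,\hat\GC_e(f_e) = \frac{K_e}{\GC_e(K_e)}\,\GC_e(f_e) = \frac{2}{\pi-3}\,\GC_e(f_e).
\end{align*}
Summing over all edges, the weighted sum in the lemma is exactly $\tfrac{2}{\pi-3}\,\GC(\vec f)$, and since $\pi>3$ the prefactor is strictly positive. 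Hence the asserted inequality is equivalent to $\GC(\vec f^{\rm (rp)}) \le \GC(\vec f^{\rm (lin)})$.

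To establish this scalar comparison I would combine the two optimality principles at the heart of the paper. On one hand, $\vec f^{\rm (rp)}$ is the global minimizer of $\OO_{\rm rp}$ over the feasible polytope; since $\vec f^{\rm (lin)}$ satisfies the KCL and obeys $|f_e|<K_e$ by hypothesis, it is an admissible competitor, whence $\OO_{\rm rp}(\vec f^{\rm (rp)}) \le \OO_{\rm rp}(\vec f^{\rm (lin)})$. On the other hand, $\vec f^{\rm (lin)}$ minimizes $\OO_{\rm lin}$ over the same affine KCL subspace, and $\vec f^{\rm (rp)}$ is a competitor there, giving $\OO_{\rm lin}(\vec f^{\rm (lin)}) \le \OO_{\rm lin}(\vec f^{\rm (rp)})$. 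Substituting $\OO_{\rm rp} = \OO_{\rm lin} + \GC$ into the first inequality and rearranging yields
\begin{align*}
    \GC(\vec f^{\rm (rp)}) - \GC(\vec f^{\rm (lin)})
    \le \OO_{\rm lin}(\vec f^{\rm (lin)}) - \OO_{\rm lin}(\vec f^{\rm (rp)}) \le 0,
\end{align*}
where the last step is precisely the second optimality inequality. This delivers $\GC(\vec f^{\rm (rp)}) \le \GC(\vec f^{\rm (lin)})$ and closes the argument.

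I do not expect a genuine obstacle; the proof is essentially bookkeeping once the constant-factor identity is spotted. The one point requiring care is verifying that both objectives are minimized over \emph{comparable} feasible sets: the linear problem~\eqref{opt:dcapprox1} drops the line limits present in~\eqref{opt:realpower1}, but the hypothesis $|f_e|<K_e$ for both solutions renders those inequality constraints non-binding, so each minimizer is an admissible trial vector in the other's problem and the two variational inequalities above are legitimate. It is also worth flagging that the identity $K_e/\GC_e(K_e)=2/(\pi-3)$ hinges on the specific form of $\OO_{\rm rp}$ and $\OO_{\rm lin}$; the lemma works precisely because $\GC_e$ is constructed so that its boundary value scales linearly with $K_e$, which is what turns the edge-wise weighting into a harmless positive constant.
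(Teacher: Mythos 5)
Your proof is correct and takes essentially the same route as the paper: both reduce the claim to $\GC(\vec f^{\rm (rp)}) \le \GC(\vec f^{\rm (lin)})$ by observing that $K_e\,\hat{\GC}_e(f_e)$ is a fixed positive multiple of $\GC_e(f_e)$, and both obtain that scalar inequality from the minimality of $\vec f^{\rm (rp)}$ for $\OO_{\rm rp}$ combined with $\OO_{\rm lin}(\vec f^{\rm (lin)}) \le \OO_{\rm lin}(\vec f^{\rm (rp)})$, which the paper derives from the exact Pythagorean identity $\| \vec f^{\rm (rp)} \|_K^2 = \| \vec f^{\rm (lin)} \|_K^2 + \| \vec \xi \|_K^2$ rather than from the variational characterization of $\vec f^{\rm (lin)}$ — the same underlying fact. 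Incidentally, your value $\GC_e(K_e) = (\pi-3)K_e/2$ is the correct one (the paper's proof writes $(\pi-1)K_e/2$, apparently dropping the $-K_e$ term), but since both constants are positive the argument is unaffected.
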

\begin{proof}
We use the notation of theorem \ref{thm:david1}.
The vector $\vec f^{\rm (rp)}$ minimizes the objective function $\OO_{\rm rp}$. Hence we obtain
\begin{align*}
    \frac{1}{2} \|  \vec f^{\rm (rp)}  \|_K^2 + \GC(\vec f^{\rm (rp)}) 
    \le
     \frac{1}{2} \|  \vec f^{\rm (lin)}  \|_K^2 + \GC(\vec f^{\rm (lin)}) 
\end{align*}
Using $\|  \vec f^{\rm (rp)}  \|_K^2 = \|  \vec f^{\rm (lin)}  \|_K^2 + \|  \vec \xi  \|_K^2$, this is rewritten as
\begin{align*}
    \GC(\vec f^{\rm (rp)}) 
    \le
     \GC(\vec f^{\rm (lin)}) -  \frac{1}{2} \|  \vec \xi  \|_K^2 \leq  \GC(\vec f^{\rm (lin)})
\end{align*}
This inequality can be rewritten in components as
\begin{align*}
     \sum_e  \GC(\vec f_e^{\rm (rp)})
     &\le 
     \sum_e \GC(\vec f_e^{\rm (lin)}) \\
     \Leftrightarrow \quad \sum_e K_e \, \frac{\GC(\vec f_e^{\rm (rp)})}{( \pi-1) K_e/2}
     &\le 
     \sum_e K_e \, \frac{\GC(\vec f_e^{\rm (lin)})}{( \pi-1) K_e/2}
\end{align*}
Using $\GC(K_e) = (\pi-1) K_e/2$ this yields the desired result.
\end{proof}

\begin{figure}[tb]
\includegraphics[width=\columnwidth]{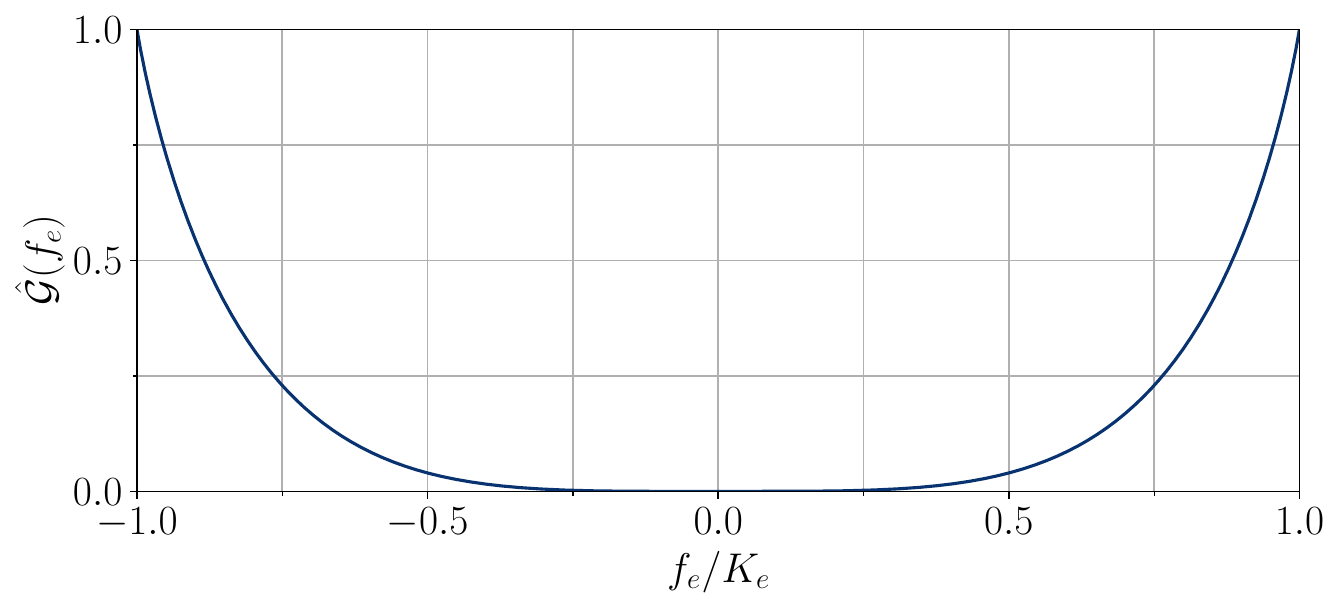}
\caption{
The indicator function $\hat \GC_e(f_e)$ in dependence of the line loading $f_e/K_e$, see Eq.~\eqref{eq:indcator_line_loading}.
\label{fig:fig:indicator_line_loadings}
}
\end{figure}

As a second step, we consider the two most heavily loaded lines within any cycle of the network. We prove that the ratio of the line loadings $|f_e/K_e|$ is bounded, where the bound is tighter for the real power flow than for the linear power flow. That is, the line loading is more homogeneous. 

\begin{lem}
\label{lem:homogeneity1}
Let $\vec f^{\rm (rp)}$ be the solution of the real power flow equations with $\vec z=\vec 0$ and $\vec f^{\rm (lin)}$ the solution of the linear power flow equations.
For every cycle $\mathcal{C}$ in the network we consider the two most heavily loaded lines $a$ and $b$, i.e.
\begin{align}
     \frac{|f_a|}{K_a} \ge 
     \frac{|f_b|}{K_b} \ge
     \frac{|f_e|}{K_e}
     \qquad 
     \forall \, e \in \mathcal{C} \backslash \{a,b\} \, .
\end{align}
Then 
\begin{align*}
    \frac{ \arcsin |f^{\rm (rp)}_a/K_a| 
        }{ \arcsin |f^{\rm (rp)}_b/K_b| }
      &\le
      |\mathcal{C}| -1, \\
    \frac{ |f^{\rm (lin)}_a/K_a| 
        }{ |f^{\rm (lin)}_b/K_b| }
      &\le
      |\mathcal{C}| -1.
\end{align*}
\end{lem}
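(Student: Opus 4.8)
The plan is to read both inequalities directly off the respective cycle identities: the cycle condition \eqref{eq:cycle-condition2} with $\vec z = \vec 0$ governs the real power flow, while Kirchhoff's voltage law \eqref{eq:KVL2} governs the linear approximation. As noted just after Eq.~\eqref{eq:cycle-condition}, once these relations hold on a basis of fundamental cycles they hold on \emph{every} cycle, so for the given cycle $\mathcal{C}$ with orientation signs $C_{e,\mathcal{C}} \in \{+1,-1\}$ I would start from
\[
\sum_{e \in \mathcal{C}} C_{e,\mathcal{C}} \arcsin\!\left( \frac{f^{\rm (rp)}_e}{K_e} \right) = 0,
\qquad
\sum_{e \in \mathcal{C}} C_{e,\mathcal{C}} \,\frac{f^{\rm (lin)}_e}{K_e} = 0 .
\]

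For the real power case I would abbreviate $g_e := \arcsin(f^{\rm (rp)}_e / K_e)$. Because $\arcsin$ is odd and strictly increasing on $[-1,1]$, the assumed ordering of the loadings $|f_e/K_e|$ is inherited by the magnitudes $|g_e|$; in particular $a$ and $b$ remain the two edges with the largest $|g_e|$. Isolating the contribution of the most heavily loaded edge $a$ in the cycle identity and taking absolute values gives
\[
|g_a| = \left| \sum_{e \in \mathcal{C} \setminus \{a\}} C_{e,\mathcal{C}} \, g_e \right|
\le \sum_{e \in \mathcal{C} \setminus \{a\}} |g_e| ,
\]
where the orientation signs are simply absorbed by the triangle inequality.

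To finish, I observe that the remaining sum has exactly $|\mathcal{C}| - 1$ terms, and each is bounded by $|g_b|$ since $b$ is the second most heavily loaded edge; hence $|g_a| \le (|\mathcal{C}| - 1)\,|g_b|$, which is the first claim after dividing by $|g_b|$. Repeating the identical argument with $g_e$ replaced by $f^{\rm (lin)}_e / K_e$ and the cycle condition replaced by the KVL \eqref{eq:KVL2} yields the linear bound; there no monotonicity step is even required, as the quantity appearing in the KVL is already the loading itself. The whole argument is thus essentially a one-line triangle-inequality estimate, so I do not expect a genuine obstacle; the only points requiring a little care are the monotonicity argument that transfers the loading order onto the $|g_e|$ (guaranteeing that $b$ dominates every remaining summand) and the bookkeeping of the orientation signs $C_{e,\mathcal{C}}$, both of which are routine. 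One may also note that the bound is saturated precisely when every remaining edge carries the same loading as $b$ and the signs align so the triangle inequality becomes an equality.
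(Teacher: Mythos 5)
Your proof is correct and follows essentially the same route as the paper's: both arguments start from the cycle identity $\sum_{e\in\mathcal{C}} C_{e,\mathcal{C}}\arcsin(f_e/K_e)=0$ (the paper obtains it as the stationarity condition $\partial\OO_{\vec z=\vec 0}/\partial\ell_1=0$, you invoke the equivalent cycle condition directly), isolate the most heavily loaded edge, and bound the remaining $|\mathcal{C}|-1$ terms by the second-largest via the triangle inequality and the monotonicity of $\arcsin$. The linear case by substituting the KVL is likewise identical to the paper's treatment.
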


\begin{proof}
We choose a cycle basis such that the cycle $\mathcal{C}$ corresponds to the first fundamental cycle, i.e.,~the first row of the edge cycle incidence matrix. For the minimizer $\vec f^{\rm (rp)}$ we have
 \begin{align*}
     0 = \frac{\partial \OO_{\vec z = \vec 0}}{\partial \ell_1} = \sum_{e \in \mathcal{C}} 
     C_{1,e}
     \arcsin \left( \frac{f_e^{\rm (rp)}}{K_e}   \right).
\end{align*}
Now we can bound the different terms in the sum and obtain
\begin{align*}
     \arcsin \left( \frac{|f_a^{\rm (rp)}|}{K_a}   \right) 
     &= \left| \sum_{e \in \mathcal{C} \backslash \{a\} } 
     C_{1,e}
     \arcsin \left( \frac{f_e^{\rm (rp)}}{K_e}   \right)  \right|   \\
     &\le 
     \left( |\mathcal{C}|-1 \right) \times
     \arcsin \left( \frac{|f_b^{\rm (rp)}|}{K_b}   \right) .
\end{align*}
Replacing $\OO_{\vec z = \vec 0}$ by $\OO_{\rm lin}$ yields the corresponding inequality for the linear power flow solution $\vec f^{\rm (lin)}$.
\end{proof}

This result is particularly useful if we deal with the stationary states of a grid after applying a Kron reduction (cf.~Sec.~\ref{sec:background}). Then the network is fully connected, and we can choose the fundamental cycles to be triangles such that $|\mathcal{C}|-1 = 2$. We then obtain the following example from lemma \ref{lem:homogeneity1}. Assume that a line $(n,m)$ is loaded with $f^{\rm (rp)}_{nm}/K_{nm} \ge 99 \%$. Then for every vertex $l \neq n,m$ one of the lines $(l,m)$ or $(l,n)$ must be loaded with $f^{\rm (rp)}/K \ge 66 \%$. That is, it is impossible for one line to get very heavily loaded in isolation -- some lines in the vicinity must be heavily loaded, too.

\section{Conclusion and Outlook}
\label{sec:conclusion}

The reliable supply of electric power is material for our society. Therefore, it is of central importance to understand which factors determine the operation and stability of the electric power system. In this article, we have analyzed the stationary states of lossless power grids described by the equations
\begin{align}
    p_n = \sum_{m} K_{n,m} \sin(\theta_n - \theta_m).
\end{align}
Being nonlinear, there is no general theory of solvability of this set of equations. Furthermore, it has been established that the equations can be multistable depending on the structure of the grid~\cite{manik2017cycle,delabays2016multistability}.

In this article, we have introduced a novel approach to the stationary states of lossless power grids. The main idea is to shift the attention from the nodes of the grid to the edges and cycles and to reformulate the equations as a convex optimization problem. This formulation provides new insights into the structure of the problem and allows to derive a series of rigorous results. 

The most important results are as follows.
\begin{enumerate}
\item The optimization approach provides a systematic algorithm to systematically compute \emph{all} stationary states that satisfy $|\theta_n - \theta_m| < \pi/2$ for all edges $(n,m) \in \EE$.
\item The linear power flow or DC approximation is recovered as the quadratic approximation to the optimization problem. This insight allows to systematically bound the error induced by this approximation.
\item We have introduced two explicit formulae that provide improved approximations for the real power flows.
\item The optimization approach provides a geometric interpretation of the network equations. This interpretation provides rigorous results on the existence of solutions.   
\end{enumerate}

The basic idea of this article can be generalized to related network flow problems. 
First, a generalization beyond the sinusoidal coupling $\sin(\theta_n - \theta_m)$ is straightforward as long as the coupling is anti-symmetric and invertible on a certain interval. 
That is, the set of equations
\begin{align}
    p_n = \sum_{m} K_{n,m} h(\theta_n - \theta_m)
\end{align}
with an anti-symmetric function $h$ can be reformulated as an optimization problem with the objective function 
\begin{align}
    \OO(\vec f) = \sum_{e \in \EE} H(f_e/K_e),
\end{align}
where $H$ is the primitive of the inverse $h^{-1}$. Unfortunately, there is no obvious way how Ohmic losses or not anti-symmetric coupling functions may be included in the formalism.

Second, the basic idea may be generalized to the lossless load-flow equations. This set of equations includes the voltage magnitudes and reactive power flows in addition to the real power. Rigorous results on the solvability of the load-flow equations are notoriously difficult to obtain but of outstanding practical importance.

\acknowledgements

We thank Andrea Benigni, Thiemo Pesch, Manuel Dahmen and Lukas Kinzkofer for stimulating discussions.
The authors gratefully acknowledge support from the Deutsche Forschungsgemeinschaft (DFG, German Research Foundation) via Grant No.~491111487.

\appendix

\section{Solutions of the KCL}
\label{app:kcl}

In this appendix we review results on the existence of solutions of the KCL with with flow constraints,
\begin{align}
    \matr E \vec f = \vec p,
    \qquad
    |f_e| \le K_e 
    \quad
    \forall e \in \EE.
    \label{eq:kcl-con-app}
\end{align}
In the context of the optimization problems \eqref{opt:realpower1} and \eqref{opt:realpower2-cycle}, this is equivalent to the question whether the feasible set is non-empty. We provide two lemmas to systematically answer this question following Ref.~\cite{manik2017cycle}.

First, one can map the given problem with multiple sources and sinks to a  single-source single-target flow problem which is commonly studied in graph theory~\cite{ford2015flows,nussbaum2010multiple}. 
Given a network with vertex set $\VV$, edge set $\EE$, and edge capacities $K_e$, we define an extended graph $\GG' = (\VV',\EE')$ by adding two vertices $s$ and $t$, that are connected to the sources and sinks, respectively. That is,
\begin{align*}
    \VV' &= \VV \cup \{s,t\}, \\
    \EE' &= \EE \; \cup \; \{ (s,n) | p_n > 0 \}
               \; \cup \; \{ (n,t) | p_n < 0 \}. 
\end{align*}
The coupling strengths of the new edges $K_{s,n}$ and $K_{t,n}$ are infinite. Furthermore, we define the cumulative input and output power
\begin{align*}
    \hat p_s = \sum_{n \in \VV, p_n>0} p_n \\
    \hat p_t = \sum_{n \in \VV, p_n<0} p_n .
\end{align*}
As usual, we assume that the network is balanced such that $\hat p_s = - \hat p_t$.

\begin{lem}
A solution of the  KCL with line limits \eqref{eq:kcl-con-app} exists if and only if the maximum $s-t-$flow in the extended network $\GG'$ is larger or equal to $\hat p_s$.
\end{lem}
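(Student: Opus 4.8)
The plan is to prove this by the max-flow/min-cut theorem, after recasting the balanced, signed, multi-source/multi-sink problem \eqref{eq:kcl-con-app} as a single-source/single-sink flow in $\GG'$. The first point to settle is the sign convention, since in the original network a flow $f_e$ may take either sign with $|f_e| \le K_e$: I would model each original edge $e = (n,m)$ by a pair of antiparallel arcs $n \to m$ and $m \to n$, each of capacity $K_e$, so that a nonnegative arc flow $(g_{n \to m}, g_{m \to n})$ realizes the signed quantity $f_e = g_{n \to m} - g_{m \to n} \in [-K_e, K_e]$. For the auxiliary arcs I would take the finite capacities $K_{s,n} = p_n$ (for $p_n > 0$) and $K_{n,t} = -p_n$ (for $p_n < 0$); this is the choice that makes the cut separating $s$ from $\VV \cup \{t\}$ have capacity exactly $\hat p_s$, and it is what renders the stated inequality $\ge \hat p_s$ equivalent to the equality $= \hat p_s$. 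The whole statement then reduces to the claim that \eqref{eq:kcl-con-app} is feasible if and only if $\GG'$ admits an $s$-$t$ flow of value $\hat p_s$.

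For the forward direction, given a feasible $\vec f$ I would lift it to an $s$-$t$ flow $g$ by setting $g$ equal to $p_n$ on each arc $(s,n)$, equal to $-p_n$ on each arc $(n,t)$, and equal to the positive and negative parts of $f_e$ on the two antiparallel arcs of each original edge. Flow conservation at an original node $n$ is then exactly the KCL identity $(\matr E \vec f)_n = p_n$, absorbed by the supply $p_n$ from $s$ or the demand $-p_n$ to $t$; the value of $g$ is $\sum_{p_n > 0} p_n = \hat p_s$. Hence the maximum flow is at least $\hat p_s$.

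For the reverse direction I would use that the cut $(\{s\}, \VV \cup \{t\})$ severs exactly the source arcs and so has capacity $\hat p_s$; by max-flow/min-cut the maximum value is at most $\hat p_s$, so the hypothesis forces it to equal $\hat p_s$ and makes this cut a minimum cut. Consequently every source arc is saturated, $g_{s \to n} = p_n$, and by conservation at $t$ every sink arc carries $-p_n$. Setting $f_e := g_{n \to m} - g_{m \to n}$ gives $|f_e| \le K_e$ (a difference of two numbers in $[0, K_e]$) and, reading conservation at each original node, $(\matr E \vec f)_n = p_n$, i.e.\ a feasible solution of \eqref{eq:kcl-con-app}.

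The main obstacle is not conceptual but bookkeeping: I must handle the bidirectional (signed) flows carefully so that the capacity constraints on the antiparallel arc pairs translate cleanly into $|f_e| \le K_e$, and verify that internal-node conservation is \emph{exactly} $\matr E \vec f = \vec p$ rather than merely an inequality. The genuinely quantitative step is the cut computation giving the upper bound $\hat p_s$ and the resulting saturation of all source arcs; this is where the capacity assignment $K_{s,n} = p_n$ is essential, and I would flag the capacity convention as the crux of the argument.
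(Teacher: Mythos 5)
Your argument is correct and is the standard reduction: model each undirected edge by two antiparallel arcs of capacity $K_e$, attach $s$ and $t$, and invoke max-flow/min-cut. The paper itself does not prove this lemma (it defers to Ref.~\cite{manik2017cycle}), so yours is the only complete argument on the table, and it goes through: the lift of a feasible $\vec f$ to an $s$--$t$ flow of value $\hat p_s$, the cut $(\{s\},\VV\cup\{t\})$ of capacity $\hat p_s$ forcing saturation of every source arc, and the resulting identification of node conservation with $\matr E\vec f=\vec p$ are all sound. The only cosmetic gap is your phrase ``by conservation at $t$ every sink arc carries $-p_n$''; the cleaner justification is that the total inflow to $t$ equals $\hat p_s=\sum_{p_n<0}(-p_n)$, which is the sum of the sink-arc capacities, so each must be saturated individually.

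The one substantive point is the capacity assignment on the auxiliary arcs, which you rightly flag as the crux. The paper declares $K_{s,n}$ and $K_{n,t}$ to be \emph{infinite}, whereas you take $K_{s,n}=p_n$ and $K_{n,t}=-p_n$. Your choice is the one under which the lemma is actually true. With unbounded auxiliary arcs the equivalence fails: take $\VV=\{n_1,n_2,m\}$ with $p_{n_1}=p_{n_2}=1$, $p_m=-2$, and edges $(n_1,m)$ with $K=2$ and $(n_1,n_2)$ with $K=1/2$. The KCL forces $|f_{(n_1,n_2)}|=1>1/2$, so \eqref{eq:kcl-con-app} is infeasible, yet the min cut of $\GG'$ with unbounded auxiliary arcs is the one isolating $\{m,t\}$, of capacity $2=\hat p_s$, so the maximum $s$--$t$ flow equals $\hat p_s$. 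The finite capacities are what encode the requirement that \emph{each} source injects exactly $p_n$ rather than the sources collectively injecting $\hat p_s$ through whichever nodes are convenient. So your proof is not merely a valid route to the statement; it also supplies the correction to the construction that the statement needs.
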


Second, we give a criterion in terms of partitions of the network. Let $(\VV_1,\VV_2)$ be an arbitrary partition of the vertex set $\VV$ such that
\begin{align*}
    \VV_1 \cup \VV_2 = \VV 
    \qquad \mbox{and} \qquad
    \VV_1 \cap \VV_2 = \emptyset 
\end{align*}
and $\EE(\VV_1,\VV_2) = \{ (n,m) \in \EE | n \in \VV_1, m \in  \VV_2  \}$ be the cut-set induced by this partition. Define 
\begin{align*}
    \bar p_1 = \sum_{n \in \VV_1} p_n, \quad
    \bar p_2 = \sum_{n \in \VV_2} p_n, \quad 
    \bar K_{12} = \sum_{e \in \EE(\VV_1,\VV_2)} K_{e}.
\end{align*}
Then we have the following lemma (see Ref.~\cite{manik2017cycle} for a proof).

\begin{lem}
If for \emph{all} partitions $(\VV_1,\VV_2)$ we have
\begin{align}
    |\bar p_1 | = |\bar p_2 | \le \bar K_{12},
    \label{eq:kcl-lemma2-condition}
\end{align}
then there exists a solution of the KCL with line constraints \eqref{eq:kcl-con-app}.
\end{lem}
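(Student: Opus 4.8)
The plan is to combine the preceding lemma with the classical max-flow min-cut theorem, turning the partition hypothesis into a statement about cuts of the extended graph $\GG'$. By the preceding lemma, it suffices to show that the maximum $s$-$t$ flow in $\GG'$ is at least $\hat p_s$; and by the max-flow min-cut theorem this is equivalent to showing that \emph{every} $s$-$t$ cut of $\GG'$ has capacity at least $\hat p_s$. Thus the whole argument reduces to bounding cut capacities from below, and the partition condition \eqref{eq:kcl-lemma2-condition} is precisely the right hypothesis to feed into this bound.

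Next I would parametrize the cuts. Any $s$-$t$ cut is given by a bipartition $S=\{s\}\cup\VV_1$, $T=\{t\}\cup\VV_2$ with $(\VV_1,\VV_2)$ a partition of the original node set $\VV$. Because the augmenting edges $(s,n)$ with $p_n>0$ and $(n,t)$ with $p_n<0$ carry infinite capacity, a cut can be finite only if none of them crosses, i.e.\ only if every source node ($p_n>0$) lies in $\VV_1$ and every sink node ($p_n<0$) lies in $\VV_2$. All cuts violating this have infinite capacity and therefore trivially exceed $\hat p_s$, so they may be discarded. For the remaining (finite) cuts only the original transmission lines joining $\VV_1$ and $\VV_2$ cross the cut, so the cut capacity equals exactly $\bar K_{12}$.

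The key observation is then that for such a source-respecting partition the partial injection sum is forced: since all sources sit in $\VV_1$ and all sinks in $\VV_2$, the zero nodes contribute nothing and $\bar p_1=\sum_{n\in\VV_1}p_n=\sum_{n:\,p_n>0}p_n=\hat p_s$. The hypothesis \eqref{eq:kcl-lemma2-condition} applied to this particular partition therefore yields $\bar K_{12}\ge|\bar p_1|=\hat p_s$. Hence every finite cut, and consequently every cut, has capacity at least $\hat p_s$; the min-cut is at least $\hat p_s$, the max-flow is at least $\hat p_s$, and the preceding lemma delivers the desired feasible flow satisfying \eqref{eq:kcl-con-app}.

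I expect the only delicate point to be the cut bookkeeping in the middle step: correctly identifying which augmenting edges cross a given cut, justifying that the infinite cuts may be ignored, and recognizing that the finite cuts are exactly the partitions for which $\bar p_1$ collapses to $\hat p_s$. Once this identification is made, only the hypothesis evaluated at these special partitions is actually needed, and the conclusion is immediate. I would also note in passing that the very same cut computation shows the condition to be necessary as well, so the criterion is in fact sharp.
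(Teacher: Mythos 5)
Your proof is correct. The paper itself does not prove this lemma (it defers to the cited reference), but it states the max-flow characterization immediately beforehand precisely so that this argument can be run, and your route---restricting to the finite $s$--$t$ cuts of the extended graph $\GG'$, observing that these are exactly the source-respecting partitions for which $\bar p_1$ collapses to $\hat p_s$ and the cut capacity to $\bar K_{12}$, and then invoking max-flow min-cut---is the standard and intended one. Your closing remark on necessity is also right, though strictly speaking necessity for \emph{arbitrary} partitions follows most cleanly from summing the KCL over $\VV_1$ and bounding the net cross-cut flow by $\bar K_{12}$, rather than from the cut computation itself, which only directly addresses the source-respecting partitions.
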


\section{Proof of Lemma \ref{lem:cycle_flows_bound_k_norm}}
\label{app:proof-norm-max-cycle}

To prove the inequality 
\begin{equation*}
\| \vec f^{(c)} \|_K^2 \geq \big( f^{(c)}_a \big)^2 \frac{1}{K_a (1 - K_a \Omega_a)}
\end{equation*}
for each cycle flow $\vec f^{(c)}$ and for each edge $a = (n,m) \in \EE$, we proceed in reverse order. That is, we fix the flow on one edge $a$ and solve the optimization problem
\begin{align*}
    &\min_{\vec g} \|\vec g\|_K^2 \\
    \mbox{s.t. } & \matr E \vec g = 0, \quad g_a = f^{(c)}_a
\end{align*}
by the methods of Lagrangian multipliers. We define the Lagrangian
\begin{equation*}
    \mathcal{L}( \vec g) = \|\vec g\|_K^2 - \sum_n \bigg[ \lambda_n \sum_b E_{n,b} g_b \bigg] - \mu (g_a - f_a^{(c)})
\end{equation*}
and find the stationary points 
\begin{align*}
    g_{(n,m)} = \begin{cases}
        K_{nm} (\lambda_n - \lambda_m), \quad & \text{if } (n,m) \neq a\\
        K_{nm} (\lambda_n - \lambda_m + \mu) \quad & \text{if } (n,m) = a
\end{cases}
\end{align*}
and note that $f^{(c)}_{a = (n,m)} = K_{nm} (\lambda_n - \lambda_m + \mu)$. 
Hence, the optimizer $\vec g^{\star}$ generally assumes the form
of a potential flow except for the edge $a$. Hence we can write it as
\begin{equation*}
    \vec g^{\star} = \matr K \matr E^\top \vec \lambda + c \vec w_a, 
\end{equation*}
where again $\vec w_a$ is the $a$-th standard basis vector. We now have to determine $\vec \lambda$ and $c$ such that the two constraints are satisfied.
\begin{enumerate}
    \item[1.] Evaluating the first constraint $\matr E \vec g^{\star} = 0$ yields
    \begin{align*}
        \matr E \matr K \matr E^\top \vec \lambda + c \matr E \vec w_a = 0. \\
    \end{align*}  
    Using that $\matr E \matr K \matr E^\top =\matr L$ this set of linear equations is solved by
    \begin{align*}
        \vec \lambda = - c \matr L^+ \matr E \vec w_a \, .
    \end{align*}
    \item[2.] The second constraint $g_a = f_a^{(c)}$ can alternatively be written as $f_a^{(c)} = \vec w_a^\top \vec g^{\star}$ and we thus have
    \begin{align*}
        f_a^{(c)} & = \vec w_a^\top \left( c \vec w_a +  \matr K \matr E^\top  (- c \matr L^+ \matr E \vec w_a) \right) \\
        \Leftrightarrow	 \quad   c &= f_a^{(c)} \left( 1 - \vec w_a^\top \matr K \matr E^\top \matr L^+ \matr E \vec w_a \right)^{-1}.
    \end{align*}
\end{enumerate}
We can now compute
\begin{align*}
    \lVert \vec g^{\star} \rVert_K ^2 &= \vec g^{\star \top} \matr K^{-1} \vec g^{\star} \\
    &= \left(c \vec w_a^\top + \vec \lambda^\top \matr E \matr K  \right) \matr K^{-1} \left( \matr K \matr E^\top \vec \lambda + c \vec w_a \right) \\
    &= c^2 K_a^{-1} + \vec \lambda^\top \matr L \vec \lambda + 2c \vec w_a^\top \matr E^\top  \vec \lambda.
\end{align*}
Inserting the expressions for $\vec \lambda$ yields
\begin{align*}
    \lVert \vec g^{\star} \rVert_K ^2 &= c^2 K_a^{-1} 
    + c^2 \vec w_a^\top \matr E^\top \underbrace{\matr L^{+} \matr L  \matr L^{+}}_{=\matr L^{+}} \matr E \vec w_a \\
    & \quad \quad - 2c^2 \vec w_a^\top \matr E^\top \matr L^+ \matr E \vec w_a \\
    &= c^2 \left( K_a^{-1} - \vec w_a^\top \matr E^\top \matr L^+ \matr E \vec w_a \right).
\end{align*}
Thus, inserting the expression for $c$ and using that $\vec w_a^\top \matr K = K_a \vec w_a^\top$ we get
\begin{align*}
    \lVert \vec g^{\star} \rVert_K ^2 &= (f_a^{(c)})^2 \left( 1 - \vec w_a^\top \matr K \matr E^\top \matr L^+ \matr E \vec w_a \right)^{-2} \\
    & \quad \quad \cdot \left( K_a^{-1} - \vec w_a^\top \matr E^\top \matr L^+ \matr E \vec w_a \right) \\
    &= (f_a^{(c)})^2 K_a^{-1} \left( 1 - K_a \underbrace{\vec w_a^\top  \matr E^\top \matr L^+ \matr E \vec w_a}_{=\Omega_a} \right)^{-1},
\end{align*}
which concludes the proof by noting that $\lVert \vec g^{\star} \rVert_K ^2 \leq \| \vec f^{(c)} \|_K^2$. 
\qed 

\section{Additional material for the proof of theorem \ref{thm:david1}}
\label{app:taylor}

In this appendix, we provide a technical result used in the proof of theorem \ref{thm:david1}.
We use the formulation \eqref{eq:objective-z} of the objective function with $\vec z = \vec 0$, $\vec f^{(0)} = \vec f^{\rm (rp)}$ and write $\vec \xi = - \matr C \vec \ell$. Then
\begin{align*}
      \OO_{\rm rp}(\vec f^{\rm (lin)})
    &= \OO_{\vec z = \vec 0}(-\vec \ell), \\ 
    \OO_{\rm rp}(\vec f^{\rm (rp)})
    &= \OO_{\vec z = \vec 0}(\vec 0).
\end{align*}
Now we apply the multivariate Taylor theorem with the Lagrange form of the remainder which yields
\begin{align*}
    \OO_{\vec z = \vec 0}(-\vec \ell) = & 
    \OO_{\vec z = \vec 0}(\vec 0)
    - \nabla \OO_{\vec z = \vec 0}(\vec 0) \cdot \vec \ell \\
    & + \frac{1}{2}
    \vec \ell^\top \nabla^2 \OO_{\vec z = \vec 0}(\vec \lambda) \vec \ell,
\end{align*}
where $\vec \lambda \in \mathbb{R}^{\nl-\nn+1}$ is a vector with entries $\lambda_\alpha$ between $0$ and $\ell_\alpha$. The linear term of the Taylor expansion vanishes because we expand around the minimizer. The quadratic term can be bounded from below. Using the Hessian \eqref{eqn:Hesse-ell} we obtain
\begin{align*}
    & \vec \ell^\top \nabla^2 \OO_{\vec z = \vec 0}(\vec \lambda) \vec \ell \\
    &= \vec \ell^\top \matr{C}^\top
    \mbox{diag} \left( \frac{1}{\sqrt{K_e^2 - (f^{(0)}_{e} + \sum_\varphi  C_{e \varphi} \lambda_\varphi )^2}}
    \right) \matr{C} \vec \ell \\
    &\ge \vec \xi^\top
    \mbox{diag} \left( \frac{1}{K_e}
    \right)
    \vec \xi  = \| \vec \xi \|_K^2 \, .
\end{align*}
Hence we obtain
\begin{align*}
     \OO_{\rm rp}(\vec f^{\rm (lin)})
     \ge \OO_{\rm rp}(\vec f^{\rm (rp)}) + \frac{1}{2}  \| \vec \xi \|_K^2.
\end{align*}

\section{\textsc{Matpower} 30-bus test case}
\label{sec:app:test_case}

The original \textsc{Matpower} 30-bus test case \cite{matpowercase30} data is intended to be used to study the full AC power flow. Hence, the power injections are only balanced up to losses. In this work, we neglect losses as well as the reactive power flows, and thus slightly adapt the \textsc{Matpower} 30-bus test case to our needs. We keep the topology and line admittances of the grid, but re-balance the real power injections. That is, the real power imbalance $\sum_n p_n$ of the loads and generators is added to the power injection of the first generator to arrive at a balanced grid. 
To study different grid loads, we introduce a power factor $p_f$ as a multiplicative scalar that rescales the re-balanced power injections, effectively changing the line loading.


%

\end{document}